\documentclass{article}
\usepackage[T1]{fontenc}    %

\usepackage{amsmath,amssymb,amsthm}
\usepackage{bbm} %

\usepackage{fullpage}
\usepackage{xcolor}

\usepackage{tikz}
\usepackage{mdframed}
\usepackage{thmtools} %

\makeatletter
\@ifundefined{newcounteralias}{}{%
  \renewcommand\thmt@autorefsetup{\@xa\def\csname\thmt@envname autorefname\@xa\endcsname\@xa{\thmt@thmname}}%
}
\makeatother

\usepackage[pagebackref,colorlinks=true,urlcolor=blue,citecolor=blue,
    linktocpage=true,linkcolor=purple,pdfstartview=FitH]{hyperref}
\usepackage[nameinlink]{cleveref}

\newtheorem{theorem}{Theorem}[section]

\newtheorem{lemma}[theorem]{Lemma}
\newtheorem{claim}[theorem]{Claim}

\newtheorem{definition}[theorem]{Definition}

\crefname{theorem}{theorem}{theorems}
\Crefname{theorem}{Theorem}{Theorems}
\crefname{conjecture}{conjecture}{conjectures}
\Crefname{conjecture}{Conjecture}{Conjectures}
\crefname{lemma}{lemma}{lemmas}
\Crefname{lemma}{Lemma}{Lemmas}
\crefname{claim}{claim}{claims}
\Crefname{claim}{Claim}{Claims}
\crefname{corollary}{corollary}{corollaries}
\Crefname{corollary}{Corollary}{Corollaries}
\crefname{definition}{definition}{definitions}
\Crefname{definition}{Definition}{Definitions}
\crefname{proposition}{proposition}{propositions}
\Crefname{proposition}{Proposition}{Propositions}
\crefname{fact}{fact}{facts}
\Crefname{fact}{Fact}{Facts}
\crefname{example}{example}{examples}
\Crefname{example}{Example}{Examples}
\crefname{assumption}{assumption}{assumptions}
\Crefname{assumption}{Assumption}{Assumptions}
\crefname{observation}{observation}{observations}
\Crefname{observation}{Observation}{Observations}
\crefname{remark}{remark}{remarks}
\Crefname{remark}{Remark}{Remarks}
\crefname{question}{question}{questions}
\Crefname{question}{Question}{Questions}

\newcommand{\ignore}[1]{}

\newcommand{\Z}{{\mathbb Z}}

\newcommand{\calP}{{\cal P}}
\newcommand{\calQ}{{\cal Q}}

\newcommand{\mbone}{\mathbbm{1}}

\newcommand{\set}[1]{\left\{ #1 \right\}}
\newcommand{\ceil}[1]{\lceil#1\rceil}
\newcommand{\Exp}{\EX}

\newcommand{\given}{\textrm{ s.t. }}

\renewcommand{\leq}{\leqslant}
\renewcommand{\geq}{\geqslant}

\newcommand{\EX}{\mathbb{E}}

\newcommand{\ind}[1]{\mbone\left[ #1 \right]}

\newcommand{\Sec}[1]{\hyperref[sec:#1]{\Cref*{sec:#1}}} %
\newcommand{\Eqn}[1]{\hyperref[eq:#1]{(\ref*{eq:#1})}} %
\newcommand{\Fig}[1]{\hyperref[fig:#1]{Fig.\,\ref*{fig:#1}}} %
\newcommand{\Tab}[1]{\hyperref[tab:#1]{Tab.\,\ref*{tab:#1}}} %
\newcommand{\Thm}[1]{\hyperref[thm:#1]{Theorem\,\ref*{thm:#1}}} %
\newcommand{\Fact}[1]{\hyperref[fact:#1]{Fact\,\ref*{fact:#1}}} %
\newcommand{\Lem}[1]{\hyperref[lem:#1]{Lemma\,\ref*{lem:#1}}} %
\newcommand{\Prop}[1]{\hyperref[prop:#1]{Prop.~\ref*{prop:#1}}} %
\newcommand{\Cor}[1]{\hyperref[cor:#1]{Corollary~\ref*{cor:#1}}} %
\newcommand{\Conj}[1]{\hyperref[conj:#1]{Conjecture~\ref*{conj:#1}}} %
\newcommand{\Def}[1]{\hyperref[def:#1]{Definition~\ref*{def:#1}}} %
\newcommand{\Alg}[1]{\hyperref[alg:#1]{Alg.~\ref*{alg:#1}}} %
\newcommand{\Obs}[1]{\hyperref[obs:#1]{Obs.~\ref*{obs:#1}}} %
\newcommand{\Ex}[1]{\hyperref[ex:#1]{Ex.~\ref*{ex:#1}}} %
\newcommand{\Clm}[1]{\hyperref[clm:#1]{Claim~\ref*{clm:#1}}} %
\newcommand{\Step}[1]{\hyperref[step:#1]{Step~\ref*{step:#1}}} %

\newcommand{\tO}[1]{\tilde{O}(#1)}

\newcommand{\bigO}[1]{O\left(#1\right)}
\newcommand{\bigtO}[1]{\tilde{O}\left(#1\right)}
\newcommand{\bigOm}[1]{\Omega\left(#1\right)}
\newcommand{\bigTh}[1]{\Theta\left(#1\right)}

\newcommand{\TMUL}{\mathsf{T}_{\textsf{MUL}}}

\crefname{algocf}{alg.}{algs.}
\Crefname{algocf}{Alg.}{Algs.}

\crefname{step}{Step}{Steps}
\Crefname{step}{Step}{Steps}

\newcommand{\Comment}[1]{\textcolor{teal}{#1}}

\title{Simple, Faster $(\min, +)$ Matrix Multiplication and APSP}

\author{
Barna Saha\thanks{Supported by NSF HDR TRIPODS Phase II grant 2217058 (EnCORE Institute).}  \\
University of California San Diego  \\
\texttt{bsaha@ucsd.edu}
\and
Christopher Ye\footnotemark[1]  \\
University of California San Diego  \\
\texttt{czye@ucsd.edu}
}

\begin{document}

\pagenumbering{roman}

\maketitle

\thispagestyle{empty}

\begin{abstract} 
    In this note, we provide a deterministic algorithm that computes the $(\min, +)$-Matrix Product $C = A * B$ of two $N \times N$ matrices $A$ and $B$ with polynomially bounded integer entries in $\tilde{O}(N^{2.9964})$ time, improving upon the exponent of $O(N^{2.9995})$ time obtained in the recent breakthrough by Anthropic (Alman and Vassilevska Williams, 2026). 
Our algorithm uses a phased approach first proposed by Bringmann, Grandoni, Saha, and Vassilevska Williams in 2016. 
The phased approach produces an approximation $\tilde{C}$ of $C$ in truly subcubic time such that the number of remaining triples $(i,k,j): A_{i,k}+B_{k,j} < \tilde{C}(i,j)$ is truly subcubic. 
This provided strong evidence that a truly subcubic $(\min, +)$-Matrix Product and therefore, a truly subcubic algorithm for All Pairs Shortest Paths (APSP) may exist. 
However, to find the remaining triples in subcubic time, a sparse certificate is needed whose construction was not known. 
We use the thin matrix product theorem of the recent breakthrough to provide such a certificate.
This gives a simple modular subcubic algorithm for APSP.

\end{abstract}

\clearpage
\pagenumbering{arabic}

\section{Introduction}

All-Pairs Shortest Paths (APSP), which asks given an $N$-vertex graph to compute the distance $\delta(u, v)$ between all pairs of vertices, is one of the central problems in fine-grained complexity.
Classical algorithms for APSP, such as the Floyd–Warshall algorithm, achieve a $O(N^3)$ running time.
A long series of works (e.g. \cite{fredman1976new,Dob1990,Takaoka92,Takaoka98,Han04,pettie2004apsp,ZwickAPSP04,Chan05,Han06,DBLP:journals/siamcomp/Chan10,HanT12}) improved sub-polynomial factors over the classic $O(N^3)$ running time, culminating in an algorithm by Williams \cite{Williams18} with $N^3 / 2^{\Theta(\sqrt{\log N})}$ running time.
Very recently, a major breakthrough by Anthropic (see Alman and Vassilevska Williams \cite{alman2026truly}) gave the first truly subcubic algorithm for APSP, falsifying one of the central hypotheses in Fine-Grained Complexity, obtaining a deterministic algorithm in $O(N^{2.9995})$ time (see also the simplified exposition due to Karingula and Lovett \cite{karingula2026simplified} that achieves $O(N^{3 - \delta})$ for some $\delta > 0$).
\cite{alman2026truly} also obtain a Las Vegas algorithm running in expected time $O(N^{2.998})$ for real-valued APSP.
In light of this breakthrough, one might naturally expect further improvements to the APSP exponent. 

In this note, we give a simple improvement to the exponent of APSP, via a faster algorithm for computing the $(\min, +)$ Product of two $N \times N$ matrices.
The $(\min, +)$ product of two matrices $A, B$, defined $(A* B)[i, j] = \min_{k} A[i, k] + B[k, j]$, is a useful primitive in algorithm design, and is known to be equivalent to APSP up to polylogarithmic factors \cite{williams2018subcubic}.
Our first result is an improved randomized algorithm.

\begin{theorem}
    \label{thm:faster-min-plus}
    There is a Las Vegas algorithm computing the $(\min, +)$-product of two $N \times N$ matrices $A, B$ with polynomially bounded integer entries in  expected time $\bigtO{N^{2.9963432}}$.
\end{theorem}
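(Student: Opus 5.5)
We follow the phased approach of Bringmann, Grandoni, Saha and Vassilevska Williams, and use the thin matrix product theorem of \cite{alman2026truly} only to build a sparse certificate for the last phase. Write the entries as $O(\log N)$-bit integers, $A[i,k]=\sum_{\ell} 2^{L-\ell} a_\ell[i,k]$ with $a_\ell$ in $\{0,1\}$ (similarly for $B$), and process bits from most to least significant. After phase $\ell$ we keep an estimate $\tilde C_\ell(i,j)$ that agrees with $C(i,j)$ on the top $\ell$ bits, up to a bounded additive slack. We also keep the set $T_\ell$ of \emph{candidate triples} $(i,k,j)$ whose truncated sum $A_\ell[i,k]+B_\ell[k,j]$ lies within $O(1)$ of $\tilde C_\ell(i,j)$. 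Because truncation errors are bounded, passing to phase $\ell+1$ only requires a $(\min,+)$ product with entries in $\{0,\dots,O(1)\}$ restricted to $T_\ell$. Such bounded-difference products can be done by fast rectangular matrix multiplication, with the cost controlled by a parameter $\Delta$.

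The BGSV analysis gives the quantitative input. With a threshold $\Delta$, the total work over all phases is subcubic. Every triple that is still ``ambiguous'' (within $\Delta$ of the optimum at every scale) can be charged so that the final number of remaining triples $(i,k,j)$ with $A_{i,k}+B_{k,j}<\tilde C(i,j)$, or close to it, is at most $N^{3-\epsilon(\Delta)}$. Otherwise a pair $(i,j)$ would have many near-optimal witnesses $k$ at every bit scale, which contradicts the additive-combinatorics counting (BSG-type) bound in BGSV. I will take this counting bound as given from that work. The new step is to compute the remaining triples explicitly in time proportional to their number plus $N^{3-\delta}$.

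The certificate comes from the thin matrix product theorem. Partition $[N]$ for $k$ into blocks of size $N^{\beta}$. For each block, compute the thin $(\min,+)$ product of $A[\cdot,K]$ and $B[K,\cdot]$, with inner dimension $N^\beta$, via that theorem in time $N^{2+\beta-\gamma}$. Each such product returns, for every $(i,j)$, the block minimum together with a witness. Combining the per-block minima with $\tilde C$ gives, for each pair $(i,j)$, exactly the set of blocks $K$ whose minimum beats $\tilde C(i,j)$. Only those blocks can contain remaining triples, so only inside them do we enumerate $k$ by brute force. The total enumeration cost is $N^\beta$ times the number of (pair, block) incidences, which is bounded by the number of remaining triples. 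Finally we take minima to get $C$. Las Vegas randomness enters through random relabelings and hashing of $k$ into blocks, so that the remaining triples spread evenly over blocks, together with the randomized witness finding.

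The main obstacle is the exponent bookkeeping. We must choose $\Delta$, $\beta$, and the rectangular multiplication parameters (with current $\omega(1,\mu,1)$ bounds) so that three costs balance. The first is the phased cost $\tilde O(N^{(3+\omega)/2}\Delta^{c})$-type terms. The second is the thin-product cost summed over $N^{1-\beta}$ blocks. The third is the enumeration cost $N^{\beta}\cdot N^{3-\epsilon(\Delta)}/N^{\beta}$-type terms. I would set up these constraints as a small optimization and solve it numerically. I expect it to yield the exponent $2.9963432$, but this step is where the constants of the thin product theorem matter most.
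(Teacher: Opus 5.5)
There is a genuine gap at the step you call the certificate. \Cref{thm:av-min-plus} is about the \emph{ordinary} integer product $XY$. For $X\in\Z^{N\times D}$ and $Y\in\Z^{D\times N}$ with $D\le N^{\varepsilon}$, $\varepsilon<0.1204$, it returns the entries of $XY$ on a prescribed set of at most $N^2/D^{\kappa}$ positions, in time $O(N^2\log^2 D/D^{\gamma})$. It does not compute a thin $(\min,+)$ product, block minima, or witnesses, and the cost $N^{2+\beta-\gamma}$ you attribute to it is not what it says. If exact per-block minima were available at that cost, summing over the $N^{1-\beta}$ blocks would give $C$ in $N^{3-\gamma}$ time, with no phases and no enumeration. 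That is a sign the primitive has been misread.

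The missing idea is a bridge from $(\min,+)$ to an ordinary product with tiny inner dimension. The paper does this in three steps.
\begin{enumerate}
\item Via the $\lceil A/2\rceil,\lceil B/2\rceil$ recursion of \Cref{lem:verification-reduction}, it reduces to $(\min,+)$-Witness Equality, so only the equality $A[i,k]+B[k,j]=C[i,j]$ has to be tested.
\item It relaxes equality to congruence modulo a random prime $p\in[P,2P]$ with $P\gg B\log U$.
\item It one-hot encodes residues: $X[i,(k,a)]=\mathbf{1}[A[i,k]\equiv a]$ and $Y_r[(k,a),j]=\mathbf{1}[B[k,j]\equiv r-a]$. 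Then, for $(i,j)\in W_r=\{C[i,j]\equiv r\}$, the entry $(XY_r)[i,j]$ counts the $k\in K$ with $A[i,k]+B[k,j]\equiv C[i,j]\pmod p$.
\end{enumerate}
Working mod $p$ is what keeps the inner dimension $Bp\le N^{\varepsilon}$ small even though entries are polynomially large. The restricted-position form of the theorem is what makes the $p$ residue classes cost $\tilde O(N^2P/D^{\gamma}+N^2D^{\kappa-\gamma})$ per block, rather than $pN^2$.

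The sparsity of the certificate also has a different source than the one you propose. A non-witness block survives $\lceil\log S\rceil$ independent random primes with probability at most $1/S$ (\Cref{clm:candidate-block-witness}). The expected brute-force work is therefore $O(N^3/S)$, plus $O(N^2B)$ for the single true witness block scanned per pair. No BGSV phases or BSG-type counting bound are used. The algorithm is Las Vegas because every candidate is checked exactly; randomness affects only the running time.

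Relying on an unspecified $\epsilon(\Delta)$ from BGSV leaves your bound unquantified. The paper's exponent is simply $3-\varepsilon\gamma$, with $\varepsilon=0.056$, $\gamma=0.0653$, $\kappa=1/2$, $B\approx N^{\varepsilon/2}$ and $P\approx B\log U$. No rectangular-multiplication exponents enter, so your three-way optimization would not reproduce $2.9963432$.

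Finally, your enumeration accounting is wrong. Scanning costs the block size times the number of (pair, block) incidences, and this can be $N^{\beta}$ times the number of remaining triples. Hashing $k$ so that triples spread evenly over blocks maximizes the number of incidences; it does not reduce it.
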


Using known reductions, this also gives faster algorithms for many problems equivalent to $(\min, +)$-Matrix Multiplication, including APSP, Negative Triangle, Replacement Paths, Second Shortest Simple Path, Metricity, $(\min, +)$-Verification \cite{williams2018subcubic}, Weighted Tree Edit Distance \cite{bringmann2020tree, nogler2025faster}, and 2D Maximum Subarray \cite{tamaki2000algorithms, backurs2016tight}.

Our algorithm uses a modular approach to tackle the $(\min,+)$ product in subcubic time, first proposed by Bringmann, Grandoni, Saha, and Vassilevska Williams  \cite{bringmann2019truly}, and subsequently refined and modified in  \cite{gu2021faster, chi2022faster, jin2026deterministic}. In \cite{bringmann2019truly}, the very first step of the algorithm is to provide an additive approximation of $C$ by recursively scaling and rounding. This additive approximation is used to perturb the entries of $A$ and $B$ so that many entries of $A$ and $B$ become small. This enables computation of fast $(\min,+)$ product over those small entries via fast matrix multiplication \cite{alon}. Surprisingly, as \cite{bringmann2019truly} shows after the perturbation, only a subcubic number of witnesses, triples which can contribute to $C$, are left: this provided a promising route towards refuting the APSP conjecture. 

In order to complete this framework and obtain truly subcubic APSP, one needs a sparse certificate to find those remaining triplets. 
The works of \cite{bringmann2019truly, gu2021faster, chi2022faster, jin2026deterministic} show such sparse certificate exists for various structured matrices such as bounded difference and monotone matrices.
Subsequent work \cite{gu2021faster, chi2022faster, jin2026deterministic} combine the perturbation and sparse certification steps through modular arithmetic over primes.

We develop on the modular approach of \cite{bringmann2019truly,gu2021faster, chi2022faster,jin2026deterministic} and utilize Theorem 25 of the recent breakthrough \cite{alman2026truly} to provide a sparse certificate in truly subcubic time without any structural assumptions. 
This leads to a slightly better running time, and a significantly simpler, easy to understand algorithm. 
We state Theorem 25 of \cite{alman2026truly} below.

\begin{restatable}[Theorem 25 of \cite{alman2026truly}]{theorem}{ThinMatrixProduct}
    \label{thm:av-min-plus}
    For every $\varepsilon < \varepsilon^* := (\ln 4) / (5 \ln 10) = 0.1204\dots$ and every $\kappa > 0$ there is a $\gamma > 0$ such that the following holds.
    Given as input matrices $X \in \Z^{N \times D}$ and $Y \in \Z^{D \times N}$, where $2 \leq D \leq N^{\varepsilon}$, whose entries are integers of absolute value at most $N^{O(1)}$, as well as a set $W$ of at most $N^2/ D^{\kappa}$ positions of an $N \times N$ matrix, the entries $(XY)[I,J]$, $(I,J) \in W$, can be computed deterministically in time $O(N^2 \log^2 D / D^{\gamma})$.
\end{restatable}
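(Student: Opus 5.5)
The statement is Theorem 25 of \cite{alman2026truly}. We use it as a black box and do not reprove it, but we outline how we would try to prove it. The route is a ``few distinct row types'' argument. The naive cost of the task is $|W|\cdot D \leq N^2 D^{1-\kappa}$, so the theorem says we can beat evaluating each requested inner product from scratch by a factor $D^{\gamma}$. Since $D \leq N^{\varepsilon}$ is tiny, the rows of $X$ and the columns of $Y$ are very short vectors. Once we pass to a small alphabet, the number of \emph{distinct} row and column patterns, at most $q^{D}$ for alphabet size $q$, is polynomially smaller than $N$. We can then precompute all pairwise inner products between patterns in a table and answer each query $(I,J)\in W$ by a lookup.

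Concretely, the steps would be as follows.

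\textbf{Step 1 (bounded entries).} Reduce to the case of small entries, $q = D^{O(1)}$. Split every entry of $X$ and $Y$ into digits in base $q$, so that $XY$ becomes a weighted sum of products of digit matrices. Alternatively, work modulo a family of primes of size $\mathrm{poly}(D)$ and reconstruct by the Chinese Remainder Theorem.

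\textbf{Step 2 (blocking).} Group the $D$ coordinates into blocks of length $d = c\log_q N$, with $c$ small enough that $q^{2d} \leq N^{2-\Omega(1)}$.

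\textbf{Step 3 (tables).} For each block, compute the table of all $q^{d}\times q^{d}$ inner products of length-$d$ patterns, in time $q^{2d}d$.

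\textbf{Step 4 (lookups).} For each $(I,J)\in W$, sum $D/d$ table lookups instead of $D$ multiplications. This gives a saving of a factor $d$.

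\textbf{Step 5 (parameters).} Balance the parameters so that the saving is a genuine power of $D$ rather than just $\log N$. This is where the hypotheses $D \leq N^{\varepsilon}$ and $|W| \leq N^2/D^{\kappa}$ enter. Because $W$ is sparse, we can afford $N^2/D^{\kappa'}$ total work on table-building for $\kappa' < \kappa$. The constraint $\varepsilon<\varepsilon^* = \ln 4/(5\ln 10)$ should arise exactly from requiring $q^{2d}$, summed over blocks and over the digit levels from Step 1, to stay below $N^2/D^{\gamma}$ for the chosen $q$ and $d$.

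\textbf{Main obstacle.} The difficulty is Step 1 combined with Step 5. Entries of absolute value $N^{O(1)}$ need $\Theta(\log N/\log q)$ digits or primes. Naively multiplying the lookup work by this number destroys the $D^{\gamma}$ saving when $D$ is small relative to $\log N$. We would first try to avoid paying per digit at query time. The idea is to use a carry-propagating ``large times small'' decomposition: keep $Y$ exact and split only $X$ into a bounded number of coarse-rounded parts plus a small remainder. The top parts then have few distinct values and are handled by the pattern tables. The remainder has entries bounded by $D^{O(1)}$ and is handled by one more table level. The goal is for the accounting to give total cost $O(N^2\log^2 D/D^{\gamma})$, with the $\log^2 D$ coming from the digit and block bookkeeping.
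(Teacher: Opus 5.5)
\textbf{Verdict: genuine gap.} The paper gives no proof of this statement. It imports it as Theorem~25 of \cite{alman2026truly}, so treating it as a black box is exactly what the paper does. The outline you sketch, however, cannot prove it.

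\textbf{Why the lookup scheme fails.} Table lookup over blocks of $d$ coordinates saves at most a factor $d$. The total budget $O(N^2\log^2 D/D^{\gamma})$ is at most $N^2$ for large $D$, and each table has $q^{2d}$ entries, so this forces $d=O(\log_q N)$. Your lookup phase then costs $|W|\cdot D/d=\Omega(N^2D^{1-\kappa}/\log N)$ when $|W|=N^2/D^{\kappa}$. The theorem requires $O(N^2\log^2 D/D^{\gamma})$.

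In the regime that matters, $D=N^{\Theta(1)}$, one has $\log N=\Theta(\log D)$; the paper uses $D=Bp=\Theta(N^{0.056})$ and $\kappa=1/2$. So you save an $O(\log D)$ factor where a factor $D^{1-\kappa+\gamma}=N^{\Omega(1)}$ is needed. No parameter choice in Step~5 closes this gap, because taking $d\gg\log N$ makes the tables superpolynomial.

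For the same reason, your opening claim that the $q^{D}$ possible patterns are polynomially fewer than $N$ holds only when $D\leq(1-\Omega(1))\log_q N$. For $D=N^{\varepsilon}$ it is false by an enormous margin: in the paper's application, a row of $X$ restricted to a block has $p^{B}$ possible patterns.

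\textbf{The obstacle you name is not the real one.} Large entries are not what makes the theorem hard. The paper only applies it to $0/1$ matrices $X,Y_r$, where your Step~1 is vacuous, and the lookup scheme still fails there for the reason above. What is actually needed is a way to evaluate a single inner product of length-$D$ vectors in time polynomially smaller than $D$, after subquadratic preprocessing.

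The data-structure form \Cref{thm:av-query} (Corollary~31 of \cite{alman2026truly}) makes this explicit. It has preprocessing $O(N^2\log^2 D/D^{g})$ and per-entry query time $O(D^{q}\log D)$, with $q<1/2$ and $q+\gamma<\kappa=1/2$ for the parameters in the paper. Four-Russians tabulation is inherently limited to logarithmic savings. Proving the statement therefore requires the genuinely different idea developed in \cite{alman2026truly}, which your proposal does not supply.
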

        
To invoke \Cref{thm:av-min-plus}, we divide the matrices $A$ and $B$ into blocks and for each block, design a family of new matrices $X$ and $Y$ satisfying the properties of \Cref{thm:av-min-plus} such that computing relevant entries of $(XY)$ helps us certify whether a given block of $A$ and $B$ is a candidate in which we ought to look for a witness.
We stop whenever the first witness is obtained.
Our second result shows that the same exponent can be achieved deterministically.

\begin{theorem}
    \label{thm:det-faster-min-plus}
    There is a deterministic algorithm computing the $(\min, +)$-product of two $N \times N$ matrices $A, B$ with polynomially bounded integer entries in time $\bigtO{N^{2.9963432}}$.
\end{theorem}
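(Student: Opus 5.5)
The plan is to derandomize the Las Vegas algorithm behind \Cref{thm:faster-min-plus}. \Cref{thm:av-min-plus} is already deterministic, so randomness can enter in only two places. The first is the choice of primes (or random shifts) in the modular approach of \cite{bringmann2019truly,gu2021faster,chi2022faster}. The second is the random sampling used to build the additive approximation $\tilde{C}$, which bounds the number of remaining witness triples. I will replace both with deterministic counterparts that lose at most polylogarithmic factors, so the exponent $2.9963432$ is unchanged.

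\textbf{Step 1: deterministic approximation $\tilde C$.} I follow the recursive scaling-and-rounding of \cite{bringmann2019truly}. Entries are rounded to multiples of $2^\ell$ for $\ell = \log(\max|A|), \dots, 0$. At each scale the $(\min,+)$ product of the rounded matrices with bounded differences is computed via fast matrix multiplication \cite{alon}, which is deterministic. The only random part in \cite{bringmann2019truly} is the choice of a hitting set of columns used to make many perturbed entries small. I replace it with a deterministic greedy hitting set, in the style of the derandomization used by Zwick \cite{ZwickAPSP04} and by \cite{jin2026deterministic}. This costs $\tilde O(N^2\cdot \text{(set size)})$, which stays below the target time. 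The conclusion should match the randomized case: the number of triples $(i,k,j)$ with $A_{i,k}+B_{k,j}<\tilde C(i,j)$ is at most $N^{3-\delta'}$ for the same $\delta'$ as before, with the bound now holding deterministically rather than in expectation.

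\textbf{Step 2: deterministic prime selection.} In the modular step, a block of $A,B$ is reduced modulo a prime $p$ of size roughly $D^{c}$. A witness survives, and spurious collisions are rare, for most primes. I will not sample $p$. Instead I fix the set of all primes in $[P,2P]$; there are $\tilde\Theta(P)$ of them, polynomial in $D$. A pigeonhole count then shows that some prime yields at most the average number of false candidates, and candidates can be counted exactly through the entries of $XY$. The intended procedure is to try primes in order and stop at the first whose candidate count, evaluated deterministically with \Cref{thm:av-min-plus} on the set $W$, is below the average. The difficulty is that this count is itself a thin product costing $N^2/D^{\gamma}$ per block. So I must argue that only polylogarithmically many primes need checking, or that a single deterministic group of primes (via CRT, taking $O(\log N)$ primes at once) suffices for every block simultaneously.

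\textbf{Step 3: finding witnesses and the main obstacle.} Given the certificate, the candidate blocks are enumerated and every candidate triple is checked by brute force. This is deterministic, and its cost is bounded by the certificate size. Running time then follows from the same parameter balancing as in \Cref{thm:faster-min-plus}. I expect Step 2 to be the main obstacle. The randomized analysis bounds false positives in expectation per block, whereas the deterministic version needs a worst-case choice that is good for all blocks at once without multiplying the cost by the number of primes. My first attempt will be a method-of-conditional-expectations argument over $O(\log N)$ primes combined by CRT. The total false-positive count is computable through the thin products, so at each stage I can keep the prime that minimizes it, paying only a $\log N$ overhead.
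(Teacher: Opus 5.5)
There is a genuine gap in your Step~2, which you rightly flag as the crux. First, the cost does not work out. Choosing a prime by conditional expectations means evaluating the objective for \emph{every} candidate $p'\in[P,2P]$, i.e.\ $\tilde\Theta(P)$ evaluations per selected prime; this is not a $\log N$ overhead. Evaluating the surviving candidate-block count for a single $p'$ with \Cref{thm:av-min-plus} costs as much as the whole candidate-generation phase, roughly $N^{3-\varepsilon\gamma}$. The total $N^{3-\varepsilon\gamma}\cdot P$, with $P\approx N^{\varepsilon/2}$ and $\gamma<1/2$, is supercubic. Stopping at the first prime below average does not help: Markov only guarantees that a constant fraction of primes are good, and nothing prevents the input from making the first half of the primes bad. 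CRT-combining primes into one modulus $m$ pushes the inner dimension $Bm$ past $N^{\varepsilon}$.

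Second, there is a structural obstruction you do not address. With $P=\Theta(B\log U)$, each prime removes only a constant fraction of false candidate blocks. Taking $P\gg B$ multiplies the thin-product cost by $P/B$, since there is one product per residue. So a single-level scheme needs $\Omega(\log N)$ primes, as in the randomized algorithm with $\lceil\log S\rceil$ primes. Separately, your Step~1 derandomizes something that is not in the algorithm. The reduction to Witness Equality (\Cref{lem:verification-reduction}) is already deterministic, and prime sampling is the only randomness.

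The paper closes these gaps with three ideas that your plan lacks.
\begin{enumerate}
\item \textbf{A computable surrogate potential.} It replaces the AND over primes by the multilinear potential $\Phi_s^{(\ell)}=\sum_{i,j,K}\prod_{p}F(i,j,K,p)$. This counts tuples $(k_1,\dots,k_s)$, so it can be evaluated for all $p'$ at once by ordinary fast matrix multiplication over $\Z[z_1,\dots,z_{s+1}]/(z_u^{r_u}-1)$ with inner dimension $N(B^{(\ell)})^{s}$ (\Cref{lem:compute-z}). The expansion $\prod_u C(i,j,K,p_u)=\prod_u\bigl(F(i,j,K,p_u)+Z(i,j,K)\bigr)$ has cross terms involving the unknown exact-witness count. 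The paper therefore estimates $\Phi$ using auxiliary small primes $q\in\calQ$ via $C(\cdot,p)-C(\cdot,pq)$. This excludes $\Delta_{ikj}=0$ while still counting each bad tuple for at least half of the $q$'s (\Cref{lem:z-is-phi-estimate}). Greedily minimizing this estimate yields the $(B^{(\ell)}\log U/P)^t$ reduction. Its cost, about $N^{\omega+t\varepsilon}$, forces $t$ to be a constant.
\item \textbf{A hierarchy of levels.} To get a polynomial reduction from constantly many primes per level, the paper uses $h=\Theta(\varepsilon\gamma\log N/t)$ levels of halving blocks. At level $\ell$ a prime has false rate $O(2^{-\ell})$, so there are $O(N^3/(B2^{\ell(t-1)}))$ false candidates. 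Blocks are searched depth first, stopping at the first witness.
\item \textbf{Level-dependent data structures.} Deeper levels have smaller inner dimension $B^{(\ell)}P$ and hence weaker thin-product savings. The paper therefore uses the data-structure version (\Cref{thm:av-query}) with a level-dependent $g_\ell$, trading slower queries for faster preprocessing exactly where queries are few.
\end{enumerate}
Without these ingredients, your plan does not yield a deterministic $\tilde O(N^{3-\varepsilon\gamma})$ bound.
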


For the sake of simplicity, we present our randomized algorithm in its entirety, and then present the deterministic algorithm separately.
We leave as an interesting future direction to further improve the exponent of $(\min, +)$ Matrix Multiplication, e.g. by improving the running time of thin matrix product, or to extend our improvements to real-valued matrices.

\paragraph{Ongoing Work.}
The main goal of this note was to give a simplified, modular approach to $(\min, +)$ Matrix Multiplication and APSP, and to contextualize the new breakthrough through existing frameworks, thereby facilitating a better intuitive understanding. 
Any improvement in exponents is a byproduct of that. 
After prompting ChatGPT with our algorithm, it has suggested further improvements of the running time, as well as extending exponent improvements to real-valued matrices using known reductions \cite{chan2022hardness}, Lemma 37 of \cite{alman2026truly} and re-optimizing the parameters. 
We are trying to understand if these improvements fit within our simple framework.

\paragraph{Statement on AI Usage.} 
For the randomized algorithm, we used publicly available ChatGPT Pro for verification of our approach, optimization of parameters and for numerical calculations. 
The subsequent ideas for the deterministic algorithm are due to ChatGPT Pro with mild supervision.
We have evaluated all proposed arguments, and refined the proof and exposition of the claimed results. The paper was written by the authors, and we take full responsibility for the final manuscript.

\section{Preliminaries}

Let $[U] = \set{1, \dots, U}$.
$\log$ is base $2$ and $\ln$ is base $e$ unless otherwise specified.
For $p \in [0, 1]$, we use $H(p) = - p \ln p - (1 - p) \ln (1 - p)$ to denote entropy.
For a matrix $A$, let $A[i, k]$ denote the $(i, k)$-th entry of $A$.
For matrices $A, B$, let $AB$ denote the standard matrix product and $A * B$ denote the $(\min, +)$ product.
Let $\TMUL(N_1, N_2, N_3)$ denote the time required to compute the product of a $N_1 \times N_2$ matrix and $N_2 \times N_3$ matrix.
We use $\omega$ to denote the matrix multiplication exponent, i.e. the infimum over all constants such that two $N \times N$ matrices can be multiplied in $N^{\omega + o(1)}$ time.
It is known $\omega \leq 9/4$ \cite{openaimath}, although for our purposes, weaker bounds on $\omega$ will suffice.
Our algorithm makes use of the prime number theorem.

\begin{theorem}[Prime Number Theorem]
    \label{thm:prime-number}
    Let $\pi(x)$ denote the number of primes at most $x$. 
    The prime number theorem states that 
    \begin{equation*}
        \pi(x) = \frac{x}{\ln x} (1 + o(1)) \text{.}
    \end{equation*}
    As a corollary, for any sufficiently large integer $R$, the number of primes in the interval $[R, 2R]$ is 
    \begin{equation*}
        \pi(2R) - \pi(R) = \Theta(R / \log R) \text{.}
    \end{equation*}
\end{theorem}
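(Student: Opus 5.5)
The Prime Number Theorem itself is classical and I will take it as given; the only thing that needs an argument is the stated corollary, namely that $\pi(2R) - \pi(R) = \Theta(R/\log R)$ for all sufficiently large integers $R$. The plan is to write both $\pi(2R)$ and $\pi(R)$ using the asymptotic formula, each with its own error term, and subtract. The leading terms differ by a factor of essentially $2$, so the difference keeps a constant fraction of $R/\ln R$.

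\textbf{Upper bound.} This is immediate. We have $\pi(2R) - \pi(R) \leq \pi(2R) = \frac{2R}{\ln(2R)}(1+o(1))$. Since $\ln(2R) \geq \ln R$, this is at most $3R/\ln R$ for large $R$, which is $O(R/\log R)$.

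\textbf{Lower bound.} Write $\pi(x) = \frac{x}{\ln x}(1+\epsilon(x))$ with $\epsilon(x) \to 0$. First rewrite the larger term with denominator $\ln R$:
\[
\frac{2R}{\ln (2R)} = \frac{2R}{\ln R} \cdot \frac{1}{1 + \ln 2/\ln R} = \frac{2R}{\ln R}\,(1 - o(1)).
\]
Substituting gives
\[
\pi(2R) - \pi(R) = \frac{R}{\ln R}\Bigl(2(1-o(1))(1+\epsilon(2R)) - (1+\epsilon(R))\Bigr) = \frac{R}{\ln R}\,(1 + o(1)).
\]
Here the three vanishing terms are different functions, but each tends to $0$ as $R \to \infty$. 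Choose $R_0$ so that each has absolute value at most $1/10$ for $R \geq R_0$. Then the bracketed factor is at least $2 \cdot 0.9 \cdot 0.9 - 1.1 > 1/2$, so the count is at least $R/(2\ln R)$ for $R \geq R_0$. Converting $\ln$ to $\log$ changes only constants, which gives the $\Omega(R/\log R)$ bound.

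The only real subtlety is the bookkeeping above: the $o(1)$ terms attached to $\pi(2R)$ and $\pi(R)$ are not the same function, so they cannot cancel. One must instead use that the leading coefficient $2$ strictly exceeds $1$, which leaves a positive margin to absorb all the errors.
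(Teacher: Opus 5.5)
Your proof is correct and matches the paper's approach: the paper gives no proof, citing the Prime Number Theorem as classical and treating the interval count as an immediate corollary, and your subtraction argument is the standard way to fill that in. In particular, you correctly use the leading coefficient $2>1$ to absorb the distinct $o(1)$ errors, via $2\cdot 0.9\cdot 0.9-1.1>1/2$.
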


\section{Faster \texorpdfstring{$(\min, +)$}{} Matrix Multiplication}

We present the proof of \Cref{thm:faster-min-plus}.

\begin{proof}[Proof of \Cref{thm:faster-min-plus}]
    By a standard reduction (e.g. shifting all entries by a large constant), we may assume that every input value is a positive integer.
    We follow the framework of previous algorithms for Bounded Difference and Monotone $(\min, +)$ Matrix Multiplication.

    \paragraph{Step 1: Reduction to Verification via Additive Approximation.}
    First, by a standard reduction (see e.g. Section 4.1 of \cite{jin2026deterministic}), we observe that it suffices to reduce to the following witness finding problem.

    \begin{definition}
        \label{def:min-plus-verification}
        Given $A,B,C \in \Z^{N \times N}$, the $(\min, +)$-Witness Equality problem asks to check for all $i, j$, whether there exists $k \in [N]$ such that $A[i, k] + B[k, j] = C[i, j]$.
    \end{definition}
    
    \begin{lemma}[Tight Reduction to Verification]
        \label{lem:verification-reduction}
        Suppose $(\min, +)$-Witness Equality can be computed in time $T(N,U)$ whenever $A, B$ have entries in $[U]$ and $C$ has entries in $[2U]$.
        Then the $(\min, +)$-Product of two $N \times N$
        integer matrices with entries in $[U]$ can be computed in time
        \begin{equation*}
            \bigO{(T(N, U) + N^2) \log U} \text{.}
        \end{equation*}
    \end{lemma}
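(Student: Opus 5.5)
Let $C^\ast = A * B$ denote the true product, with entries in $[2, 2U]$. The plan is to compute $C^\ast$ bit by bit via a binary search carried out simultaneously for all $(i,j)$. The search is driven by the scaling recursion of \cite{bringmann2019truly}, and each level of the recursion makes a constant number of calls to the Witness Equality oracle. We recurse on the precision. For $\ell = \lceil \log U \rceil, \dots, 0$, let $A_\ell[i,k] = \lceil A[i,k]/2^\ell \rceil$ and $B_\ell[k,j] = \lceil B[k,j]/2^\ell \rceil$, and let $C_\ell = A_\ell * B_\ell$. At the top level all entries of $A_\ell$ and $B_\ell$ are at most $1$, so $C_\ell$ is trivial to compute. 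At $\ell=0$ we have $C_0 = C^\ast$. It therefore suffices to compute $C_{\ell-1}$ from $C_\ell$ in time $O(T(N,U)+N^2)$.

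The key relation comes from $\lceil x/2^{\ell}\rceil \cdot 2 - 1 \le \lceil x/2^{\ell-1}\rceil \le 2\lceil x/2^\ell\rceil$. Applying it entrywise gives
\begin{equation*}
  2C_\ell[i,j] - 2 \le C_{\ell-1}[i,j] \le 2C_\ell[i,j].
\end{equation*}
Thus each entry of $C_{\ell-1}$ takes one of three candidate values $2C_\ell[i,j]-t$ with $t \in \{0,1,2\}$. To decide among them, I will test candidates in increasing order. For a target matrix $C'$ I ask whether some $k$ satisfies $A_{\ell-1}[i,k] + B_{\ell-1}[k,j] = C'[i,j]$. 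Because every sum is at least $C_{\ell-1}[i,j]$, the smallest candidate value that has a witness is exactly $C_{\ell-1}[i,j]$. So three oracle calls with $C' = 2C_\ell - t$, for $t = 2,1,0$, determine every entry. Some entry of a candidate $C'$ may fall outside the allowed range; in that case I clamp it to a dummy value such as $2U$ and ignore the answer for that entry. Entries that are too small cannot be correct anyway.

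One technical point is that the oracle is specified for entries in $[U]$ and targets in $[2U]$. The rounded matrices $A_{\ell-1}$ and $B_{\ell-1}$ have entries in $[\lceil U/2^{\ell-1}\rceil] \subseteq [U]$, and the targets lie in $[2U]$ after clamping. I will also assume $T(N,U)$ is monotone in $U$, which is the natural assumption, so calls with a smaller effective range cost at most $T(N,U)$.

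The total cost is $O(\log U)$ levels, each consisting of three oracle calls plus $O(N^2)$ work to form the rounded matrices and candidates. This gives the claimed bound $O((T(N,U)+N^2)\log U)$. I expect the only delicate step to be verifying the rounding sandwich. That requires checking that the minimizer of $C_{\ell-1}$ is a feasible index for the bound on $C_\ell$, and vice versa: use the minimizing $k$ of one product to bound the other in each direction. The range bookkeeping for the oracle is the remaining minor obstacle.
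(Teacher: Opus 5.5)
Your proposal is correct and takes essentially the same approach as the paper. The paper recurses on $U$ with $A'=\lceil A/2\rceil$ and $B'=\lceil B/2\rceil$, which unrolls to your $\lceil A/2^\ell\rceil$. It derives the same sandwich $2X'-2\le A*B\le 2X'$ and makes three Witness Equality calls per level, keeping the smallest candidate that has a witness. The only cosmetic differences are these: the paper assumes $U$ is a power of two, so every candidate $2X'+s$ already lies in $[2U]$ and no clamping is needed. Also, your monotonicity assumption on $T$ is unnecessary, since inputs with smaller ranges are already valid inputs to the $[U]$-oracle.
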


    We defer the proof of \Cref{lem:verification-reduction} and focus on providing an algorithm for $(\min, +)$-Witness Equality.
    The idea is essentially that from $\ceil{A/2} * \ceil{B/2}$, we can obtain a $+2$-approximation of $A * B$.
    Then, invoking the oracle for $(\min, +)$-Witness Equality $O(1)$ times, we can check all $3$ possible values of $(A * B)[i, j]$ for each $i, j$.
    We say $(i, j)$ is accurate if there exists $k \in [N]$ such that $A[i, k] + B[k, j] = C[i, j]$.
    By \Cref{lem:verification-reduction}, it suffices to check for whether every pair $(i, j)$ is accurate.

    \paragraph{Step 2: Finding a Sparse Certificate.}
    In this step, our goal is to efficiently prune triples $(i, k, j)$ for which $A[i, k] + B[k, j] \neq C[i, j]$. 
    Fix parameters $1 \leq B, S \leq N$ whose values we will determine later and let $P \gg B \log U$ for some sufficiently large constant.
    Split the inner dimension $[N]$ into $O(N/B)$ contiguous blocks of size $B$.
    A block $K$ is a \emph{block witness} for output pair $(i, j)$ if there exists $k \in K$ such that $A[i, k] + B[k, j] = C[i, j]$.
    Since $K$ are disjoint and cover $[N]$, observe that $(i, j)$ is accurate if and only if there is a block witness $K$.

    To find block witnesses efficiently, we begin by finding a set of candidate witnesses.
    Let $p \in [P, 2P]$ be a prime.
    For a block $K$ and prime $p$, we say $K$ is a \emph{candidate block witness} with respect to $p$ for $(i, j)$ if there exists $k \in K$ satisfying
    \begin{equation*}
        A[i, k] + B[k, j] \equiv C[i, j] \mod p \text{.}
    \end{equation*}
    We have the following claim.

    \begin{claim}[Candidate Block Witness]
        \label{clm:candidate-block-witness}
        Fix $i, j \in [N]$.
        If $K$ is a block witness, then $K$ is a candidate block witness with respect to $p$ for any choice of prime $p$.

        Let $p \sim [P, 2P]$ be a randomly sampled prime.
        If $K$ is not a block witness, then $K$ is not a candidate block witness with respect to $p$ with probability $\geq 1/2$ over the choice of $p$.
    \end{claim}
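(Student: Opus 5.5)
The first part of \Cref{clm:candidate-block-witness} is immediate. If $K$ is a block witness, there is some $k \in K$ with $A[i,k]+B[k,j] = C[i,j]$ over the integers. Reducing both sides modulo any prime $p$ gives $A[i,k]+B[k,j] \equiv C[i,j] \pmod p$, so $K$ is a candidate block witness with respect to every $p$.

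For the second part, I will use a union bound over $k \in K$, counting primes. Suppose $K$ is not a block witness. For each $k \in K$ set
\begin{equation*}
    d_k := A[i,k]+B[k,j]-C[i,j] \text{,}
\end{equation*}
which is a nonzero integer. Since $A,B$ have entries in $[U]$ and $C$ has entries in $[2U]$, we have $0 < |d_k| \leq 2U$. The block $K$ is a candidate block witness with respect to $p$ exactly when $p \mid d_k$ for some $k \in K$. Every prime in $[P,2P]$ is at least $P$, so the number of such primes dividing a fixed nonzero $d_k$ is at most $\log_P |d_k| \leq \log(2U)/\log P \leq \log(2U)$. Summing over the at most $B$ elements of $K$, at most $B \log(2U)$ primes in $[P,2P]$ are bad for $(i,j,K)$.

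By \Cref{thm:prime-number}, the interval $[P,2P]$ contains $\Theta(P/\log P)$ primes. The probability that a uniformly random prime $p$ in this interval is bad is therefore at most
\begin{equation*}
    \frac{O(B \log U \cdot \log P)}{P} \text{.}
\end{equation*}
This is where the main (minor) obstacle lies: the $\log P$ factor lost from prime density. It is absorbed in one of two ways. One option is to use the sharper count $\log(2U)/\log P$ of large prime divisors, which gives a bound of $O(B\log U)/P$; this is at most $1/2$ once $P \geq cB\log U$ for a large enough constant $c$, matching the assumption $P \gg B\log U$. The other is to take $P$ larger by a logarithmic factor, which does not affect the later running time beyond $\tilde O$. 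Either way, $K$ fails to be a candidate block witness with probability at least $1/2$.
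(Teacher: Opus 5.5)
Your proof is correct and follows the paper's argument. The first part comes from reducing the integer equality mod $p$. For the second part, you bound the number of primes in $[P,2P]$ dividing each nonzero difference by $\log_P$ of its magnitude, then combine the $\Theta(P/\log P)$ prime density with a union bound over the $B$ elements of $K$ to get $O(B\log U/P) \leq 1/2$. The $\log P$ ``obstacle'' arises only because you relaxed $\log(2U)/\log P$ to $\log(2U)$. Keeping the sharper count (your first option) is exactly what the paper does, so the second option is unnecessary.
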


    \begin{proof}[Proof of \Cref{clm:candidate-block-witness}]
        The first claim is trivial, so we focus on the second.
        Since $K$ is not a block witness, for $k \in [K]$ define
        \begin{equation*}
            \Delta_{k} := A[i, k] + B[k, j] - C[i, j] \neq 0 \text{.}
        \end{equation*}
        Note that $0 < |\Delta_{k}| \leq 3 U$ so that $\Delta_{k}$ has at most $O(\log_P U) = \bigO{\frac{\log U}{\log P}}$ prime factor in the range $[P, 2 P]$.
        By the prime number theorem, there are $\Theta(P / \log P)$ primes in the range $[P, 2P]$ so that for a random choice of prime $p \sim [P, 2P]$, the probability that $\Delta_{k} \equiv 0 \mod p$ is at most 
        \begin{equation*}
            \bigO{\frac{\log U}{\log P} \frac{\log P}{P}} = \bigO{\frac{\log U}{P}} \text{.}
        \end{equation*}
        A union bound ensures that $\Delta_{k} \equiv 0 \mod p$ for any $k \in K$ with probability at most $O(B \log U / P)$.
        By our choice of $P \gg B \log U$, the probability can be upper bounded by $1/2$.
    \end{proof}

    This motivates the following approach: sample $O(\log n)$ primes independently in the range $[P, 2P]$. 
    We say $K$ is a \emph{candidate block witness} if $K$ is a candidate block witness with respect to all sampled primes $p$.
    Any block that is a block witness will be a candidate block witness, whereas any block that is not a block witness will be a candidate block witness with inverse polynomial probability.
    Our first goal is to generate candidate block witnesses algorithmically.
    To do so, we give a subroutine that given block $K$ and prime $p$, checks whether $K$ is a candidate block witness with respect to $p$.

    \begin{lemma}[Candidate Block Witness Verification]
        \label{lem:candidate-block-wit-ver}
        Let $K$ be a block and $p \in [P, 2P]$ be a prime.
        For every $\varepsilon < \varepsilon^* := (\ln 4) / (5 \ln 10) = 0.1204\dots$ and every $\kappa > 0$ there is a $\gamma > 0$ such that the following holds.
        
        If $2 B P \leq N^{\varepsilon}$, there is an algorithm that decides for every pair $(i, j)$ whether $K$ is a candidate block witness with respect to $p$ for $(i, j)$ in time 
        \begin{equation*}
            \bigtO{N^2 + NBP^2 + N^2 P^{1 - \gamma} B^{- \gamma} + (BP)^{\kappa - \gamma} N^2} \text{.}
        \end{equation*}
    \end{lemma}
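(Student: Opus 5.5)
The plan is to reduce, for a fixed block $K$ and prime $p$, the question ``is there $k\in K$ with $A[i,k]+B[k,j]\equiv C[i,j] \pmod p$'' to a counting problem. Group the output pairs by the residue of $C[i,j]$. The counting problem is then an ordinary integer matrix product with inner dimension $D=|K|\cdot p\leq 2BP\leq N^{\varepsilon}$. We evaluate that product only at the relevant positions using \Cref{thm:av-min-plus}.

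\textbf{Step 1 (encoding).} Index the inner dimension by pairs $(k,r)$ with $k\in K$ and $r\in\Z_p$, so $D = |K| p \leq 2BP$; we may pad so that $D\geq 2$. Define the $0/1$ matrix $X\in\Z^{N\times D}$ by $X[i,(k,r)] = \ind{A[i,k]\equiv r \pmod p}$. For each residue $t\in\Z_p$, define $Y_t\in\Z^{D\times N}$ by $Y_t[(k,r),j]=\ind{B[k,j]\equiv t-r \pmod p}$. Then
\begin{equation*}
(XY_t)[i,j] = \#\set{k\in K : A[i,k]+B[k,j]\equiv t \pmod p}.
\end{equation*}
Let $W_t=\set{(i,j): C[i,j]\equiv t \pmod p}$. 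For $(i,j)\in W_t$, the block $K$ is a candidate block witness with respect to $p$ for $(i,j)$ if and only if $(XY_t)[i,j]>0$. The sets $W_t$ partition $[N]^2$, and we compute all of them in $O(N^2)$ time by reducing $C$ modulo $p$. Building $X$ costs $O(NBP)$. Building all $p$ matrices $Y_t$ costs $O(NBP)$ each, so $O(NBP^2)$ in total. This accounts for the $N^2+NBP^2$ terms.

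\textbf{Step 2 (invoking \Cref{thm:av-min-plus}).} The entries are $0/1$, so they are certainly polynomially bounded, and $2\leq D\leq N^{\varepsilon}$. Fix $\varepsilon,\kappa$ and take the $\gamma$ provided by \Cref{thm:av-min-plus}. The theorem only accepts position sets of size at most $N^2/D^{\kappa}$, so we split each $W_t$ into $\ceil{|W_t| D^{\kappa}/N^2}$ pieces of size at most $N^2/D^{\kappa}$. Since $\sum_t |W_t| = N^2$, the total number of pieces is at most
\begin{equation*}
\sum_{t\in\Z_p}\left(\frac{|W_t|D^{\kappa}}{N^2}+1\right)\leq D^{\kappa}+p.
\end{equation*}
Each call costs $O(N^2\log^2 D/D^{\gamma})$. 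The total time is therefore $\bigtO{(D^{\kappa}+P)N^2/D^{\gamma}}$. Substituting $D=\Theta(BP)$ gives $\bigtO{(BP)^{\kappa-\gamma}N^2 + N^2P^{1-\gamma}B^{-\gamma}}$. Combining with Step 1 yields the claimed bound. Finally, we read off the answer $\ind{(XY_t)[i,j]>0}$ for each $(i,j)\in W_t$.

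\textbf{Main obstacle.} The only real subtlety is that \Cref{thm:av-min-plus} handles a single product $XY$ on a sparse set $W$. Our target position is governed by $C[i,j]$, which couples the two sides. The residue-class splitting resolves this by making the right factor $Y_t$ depend only on $t$. The remaining care is the chunking argument: it is needed both because a single $W_t$ may be far larger than $N^2/D^{\kappa}$, and because there are $p$ classes, each of which may incur one partial chunk. The resulting additive $p$ term is exactly what produces the $P^{1-\gamma}B^{-\gamma}$ summand.
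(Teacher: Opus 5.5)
Your proposal is correct and follows the paper's proof essentially step for step. It uses the same partition of output pairs into classes $W_t$ by $C[i,j] \bmod p$, the same $0/1$ encoding $X, Y_t$ with inner dimension $|K|p \leq 2BP$, and the same chunking of each $W_t$ into pieces of size at most $N^2/D^{\kappa}$, where the extra partial chunk per residue class gives the $N^2P^{1-\gamma}B^{-\gamma}$ term. One point is left implicit, here and in the paper: substituting $D=\Theta(BP)$ into that term needs $|K|=B$, which holds for blocks of size $B$ or can be enforced by padding $K$ with zero coordinates (padding only to $D\geq 2$ is not enough).
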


    \begin{proof}[Proof of \Cref{lem:candidate-block-wit-ver}]
        Here, we will use the \Cref{thm:av-min-plus}, restated below.

        \ThinMatrixProduct* 
        
        Note that since $N^2 / D^{\kappa}$ is the output size, we must have $\gamma \leq \kappa$.
        We begin with the algorithm.

        \begin{mdframed}
            {\bf Verify Candidate Block Witness.}
            \begin{enumerate}
                \item For $r \in \Z_{p}$, let $W_{r} := \set{(i, j) \given C[i, j] \equiv r \mod p}$.
                \item For $r \in \Z_{p}$, construct boolean matrices $X, Y_{r}$ with entries
                \begin{equation*}
                    X[i, (k, a)] = \ind{A[i, k] \equiv a \mod p} \text{,}~~~~ Y_{r}[(k, a), j] = \ind{B[k, j] \equiv r - a \mod p} \text{.}
                \end{equation*}
                \Comment{($X$ can be constructed once as it does not depend on $r$.)}
                \item For $r \in Z_{p}$, compute $(X Y_{r})[(i, j)]$ for $(i, j) \in W_{r}$.
                \item For $r \in \Z_{p}$ and $(i, j) \in W_{r}$, return $\ind{(X Y_{r})[i, j] > 0}$.
            \end{enumerate}
        \end{mdframed}

        We begin with correctness.
        For any pair $(i, j) \in W_{r}$,
        \begin{align*}
            (XY_{r})[i, j] &= \sum_{(k, a)} \ind{A[i, k] \equiv a \mod p} \ind{B[k, j] \equiv r - a \mod p} \\
            &= \sum_{k} \sum_{A[i, k] \equiv a \mod p} \ind{A[i, k] + B[k, j] \equiv a + (r - a) \equiv r \equiv C[i, j] \mod p} \\
            &= \sum_{k}  \ind{A[i, k] + B[k, j] \equiv C[i, j] \mod p} \text{.}
        \end{align*}
        The first equality follows by definition of the matrix product.
        The second equality follows by $(i, j) \in W_{r}$.
        The third equality follows by the fact that $A[i, k] \equiv a \mod p$ for exactly one value of $a \in \Z_{p}$.
        Thus, $K$ is a candidate block witness with respect to $p$ for $(i, j) \in W_{r}$ if and only if $(XY_{r})[i, j] > 0$, establishing correctness.

        To bound the running time, fix $r$ and examine the shape of $X$ and $Y_{r}$.
        Note that $X$ has dimension $N \times B p$, $Y_{r}$ has dimension $B p \times N$ and both can be constructed in linear time.
        Thus, in order to apply \Cref{thm:av-min-plus}, we require $D := B p \leq 2 B P \leq N^{\varepsilon}$.
        Furthermore, if $W_{r} \geq N^2 / D^{\kappa}$, we can partition $W_{r}$ into $O(|W_{r}| D^{\kappa}/N^2)$ disjoint subsets of size at most $N^2 / D^{\kappa}$.
        Applying \Cref{thm:av-min-plus} to each disjoint subset, the total time to compute all $(XY_{r})[i, j]$ for all $(i, j) \in W_{r}$ over all $r$ is therefore
        \begin{align*}
            \sum_{r} \bigO{\frac{N^2 \log^2 D}{D^{\gamma}} \left( 1 + \frac{|W_{r}| D^{\kappa}}{N^2} \right)} = \bigtO{\frac{N^2 P}{D^{\gamma}} + D^{\kappa - \gamma} \sum_{r} |W_{r}|} = \bigtO{\frac{N^2 P}{D^{\gamma}} + D^{\kappa - \gamma} N^2} \text{.}
        \end{align*}
        Constructing $X$ requires $O(NBP)$ time, while constructing $Y_{r}$ over all $r$ requires $O(NBP^2)$ time.
        Finally, note that the remaining steps requires $O(N^2)$ time.
    \end{proof}

    \paragraph{Step 3: Verifying the Sparse Certificate}
    Given \Cref{lem:candidate-block-wit-ver}, we present our algorithm for $(\min, +)$-Witness Equality. 
    In this final step, we search all candidate block witnesses and check if they are true block witnesses.
    As discussed, we begin by sampling random primes and generating all candidate block witnesses for every pair $(i, j)$.
    
    \begin{mdframed}
        {\bf Generate Candidate Block Witnesses.}
        \begin{enumerate}
            \item Sample $\ceil{\log S}$ random primes independently in the range $[P, 2P]$.
            \item Using \Cref{lem:candidate-block-wit-ver}, verify for every pair $(i, j)$ and block $K$, whether $K$ is a candidate block witness with respect to all sampled primes. Return $K$ is a candidate block witness if and only if yes.
        \end{enumerate}
    \end{mdframed}

    Correctness of the above procedure follows by the correctness of \Cref{lem:candidate-block-wit-ver} and the definition of candidate block witnesses.
    The time required over all $O(N/B)$ blocks $K$ and $O(\log S)$ primes $p$ is
    \begin{equation*}
        \bigtO{\frac{N^3}{B} + N^2 P^2 + \frac{N^3 P^{1 - \gamma}}{B^{1 + \gamma}} + \frac{N^3 (BP)^{\kappa - \gamma}}{B}} \text{.}
    \end{equation*}
    For a pair $(i, j)$, let $K(i, j)$ denote the set of candidate block witnesses.
    In the second phase, we test whether a candidate block witness is a block witness.
    We use the following algorithm for every pair $(i, j)$.

    \begin{mdframed}
        {\bf Verify Block Witnesses.}
        \begin{enumerate}
            \item For each candidate block witness $K \in K(i, j)$ enumerate $k \in K$ and check $A[i, k] + B[k, j] = C[i, j]$.
            \item If any check succeeds, return $(i, j)$ is accurate and terminate.
            \item If no check succeeds, return $(i, j)$ is inaccurate.
        \end{enumerate}
    \end{mdframed}

    Correctness of the algorithm follows from the fact that \Cref{clm:candidate-block-witness} implies that every block witness is a candidate block witness.
    Thus, if $(i, j)$ is accurate, we will eventually iterate over a block witness in $K(i, j)$ and enumerate the witness $k$ such that $A[i, k] + B[k, j] = C[i, j]$.
    On the other hand, if $(i, j)$ is not accurate, no check will succeed.

    To bound the running time, we separately bound the time required to process block witnesses and false candidates (i.e. candidate block witnesses that are not block witnesses).
    Since we terminate after any successful check, we process at most $1$ block witness for every pair $(i, j)$, resulting in total time $O(N^2 B)$ to process all block witnesses.
    To bound the time required to process false candidates, we bound the total number of false candidates.
    From \Cref{clm:candidate-block-witness}, a block $K$ that is not a block witness is a candidate block witness with probability at most $2^{-\ceil{\log S}} \leq S^{-1}$.
    Since there are $N^2$ $(i, j)$ pairs and $O(N/B)$ blocks per pair, the total number of false candidates is in expectation at most $O(N^3 / (BS))$, resulting in $O(N^3 / S)$ expected time to process all false candidates.
    Overall, the expected runtime of our algorithm is 
    \begin{equation*}
        \bigtO{\frac{N^3}{B} + N^2 P^2 + \frac{N^3 P^{1 - \gamma}}{B^{1 + \gamma}} + \frac{N^3 (BP)^{\kappa - \gamma}}{B} +  N^2 B + N^3 / S} \text{.}
    \end{equation*}
    Recall that we require $P \gg B \log U$, so that minimizing $P$ subject to this constraint, the running time is
    \begin{equation*}
        \bigtO{\frac{N^3}{B} + N^2 B^2 \log^2 U + N^3 B^{-2 \gamma} \log U + N^3 B^{2 \kappa - 2 \gamma - 1} \log U + N^2 B + N^3 / S} \text{.}
    \end{equation*}
    Setting $S = N$ and simplifying (e.g. recalling $\kappa \geq \gamma$), we obtain the running time
    \begin{equation*}
        \bigtO{\frac{N^3 B^{2(\kappa - \gamma)}}{B} \log U + N^2 B^2 \log^2 U + N^3 B^{- 2 \gamma} \log U} \text{.}
    \end{equation*}
    From Table 2 of \cite{alman2026truly}, we instantiate the constants as $\varepsilon = 0.056$, $\kappa = 0.5$, and $\gamma = 0.0653$.
    Then, note that $2(\kappa - \gamma) - 1 = - 2 \gamma$ so that it suffices to balance 
    \begin{equation*}
        \bigtO{N^3 B^{-2 \gamma} \log U + N^2 B^2 \log^2 U} \text{.}
    \end{equation*}
    Subject to the constraint $2 B P \leq N^{\varepsilon}$ and $P \gg B \log U$, we set $B \ll N^{\varepsilon/2} / \sqrt{\log U}$ for some sufficiently small constant.
    Since \Cref{thm:av-min-plus} requires that $\varepsilon < \varepsilon^*$ and $\gamma \leq \kappa$ which we have chosen to be $\kappa = 1/2$, the running time is 
    \begin{equation*}
        \bigtO{N^{3 - \gamma \varepsilon} \log U} = \bigtO{N^{2.9963432} \log U} \text{.}
    \end{equation*}
\end{proof}

We prove the remaining required reduction to $(\min, +)$-Witness Equality for completeness.
The proof is essentially identical to \cite{jin2026deterministic}, with the only change being that the input and output matrices are not required to be monotone.

\begin{proof}[Proof of \Cref{lem:verification-reduction}]
    We recurse on the value of $U$.
    In the base case where $U = 1$, return $(A*B)[i, j] = 2$ for all $(i, j)$.

    Then, suppose $U > 1$.
    Without loss of generality, suppose $U$ is a power of $2$.
    We begin by constructing matrices $A', B'$ as follows
    \begin{equation*}
        A' = \ceil{A/2} \text{,}~~~~ B' = \ceil{B/2} 
    \end{equation*}
    and recursively compute $X'=A'* B'$. 
    It is easy to see that $A', B'$ have values in $[U/2]$.
    
    Let $X=A\star B$. 
    For every triple $i,k,j$, observe that
    \begin{equation*}
        2(A'[i, k] + B'[k, j] - 1) \leq A[i, k] + B[k, j] \leq 2(A'[i, k] + B'[k, j]) \text{.}
    \end{equation*}
    Taking minima over $k$ yields
    \begin{equation*}
        2 X'[i, j] - 2 \leq X[i, j] \leq 2 X'[i, j]
    \end{equation*}
    for every pair $(i, j)$.
    Thus, $X[i, j] \in \set{2 X'[i, j] - 2, 2 X'[i, j] - 1, 2 X'[i, j]}$.

    For each $s \in \set{-2, -1, 0}$ run the $(\min, +)$-Witness Equality algorithm with candidate matrix $C_{s}[i, j] = 2 X'[i, j] + s \in [2U]$ since $2 \leq X'[i, j] \leq U$.
    For every $(i,j)$, return the smallest $2 X'[i, j] + s$ for which the $(\min, +)$-Witness Equality algorithm reports that there exists $k$ such that
    \begin{equation*}
        A[i, k] + B[k, j] = 2 X'[i, j] + s \text{.}
    \end{equation*}
    Note that this algorithm is correct since $X[i, j] \in \set{2 X'[i, j] - 2, 2 X'[i, j] - 1, 2 X'[i, j]}$, so that the minimum value with a witness must be the true value of the $(\min, +)$ product.
    
    Let $M(N, U)$ denote the time required to compute the product of two $N \times N$ matrices with entries in $[U]$.
    By the master theorem, the total time required is
    \begin{equation*}
        M(N, U) = M(N, U/2) + \bigO{(T(N, U) + N^2)} = \bigO{(T(N, U) + N^2) \log U} \text{.}
    \end{equation*}
\end{proof}

\subsection{Derandomization}

Next, we discuss how to derandomize \Cref{thm:faster-min-plus}.

\begin{proof}[Proof of \Cref{thm:det-faster-min-plus}]
    Observe that in the proof of \Cref{thm:faster-min-plus} the only randomization required is to sample primes in Step 2, finding a sparse certificate. 
    We discuss how to choose these primes deterministically below. 
    To exploit the resulting sparsity, we search the candidate blocks depth-first and stop after finding the first witness for each output pair. 
    We use the data structure version of \Cref{thm:av-min-plus} (see Corollary 31 of \cite{alman2026truly}), with parameters depending on the level, to answer the queries made by this search.

    Fix parameters $1\leq B,P\leq N$ whose values we will determine later. 
    We may assume $U\geq2$ and that $B$ is a power of two. 
    We begin by splitting the inner dimension $[N]$ into $O(N/B)$ contiguous blocks of size at most $B$. 
    We call these blocks level-$0$ blocks, and denote one such block by $K^{(0)}$. 
    At level $0\leq\ell\leq\log B$, each level-$0$ block is split into at most $2^\ell$ contiguous blocks of size at most $B^{(\ell)}:=B/2^\ell$. 
    Note that $B=B^{(0)}$. 
    We let $K^{(\ell)}$ denote a level-$\ell$ block and let $h \leq \log B$ denote the last level that we search. As before, we say that $K^{(\ell)}$ is a block witness for $(i,j)$ if there exists $k\in K^{(\ell)}$ such that $A[i,k]+B[k,j]=C[i,j]$. 
    For each triple $(i,k,j)$, define $\Delta_{ikj}=A[i,k]+B[k,j]-C[i,j]$, so that we are looking, for each $(i,j)$, for some $k$ such that $\Delta_{ikj}=0$.

    Let us begin by recalling the properties required by the randomized algorithm.
    Let $\calP$ denote all primes between $[P, 2P]$.
    Consider some prime $p \in \calP$.
    We say a block $K^{(\ell)}$ is a candidate block witness for $(i, j)$ with respect to $p$ if there exist $k \in K^{(\ell)}$ such that $A[i, k] + B[k, j] \equiv C[i, j] \mod p$, and $K^{(\ell)}$ is a false candidate if it is a candidate block witness but not a block witness.
    Following the proof of \Cref{clm:candidate-block-witness}, we can show that for a randomly chosen prime $p \in \calP$, any block $K^{(\ell)}$ is a false candidate with probability at most $\bigO{\frac{B^{(\ell)} \log U}{P}}$.
    Thus, we may choose $P = \Theta(B \log U)$ to bound this probability by a constant, say $1/2$.
    In fact, by independently selecting $s$ random primes $p \in \calP$, one can reduce this probability to $2^{-s}$.
    To derandomize the above approach, we will instead show how to deterministically find primes $p \in \calP$ that reduce the number of false candidates.

    At level $\ell$, we will ensure that the number of false candidates is upper bounded by a function $\Phi^{(\ell)} := \bigTh{\frac{N^3}{B 2^{7 \ell}}}$, a bound which is trivially satisfied at $\ell = 0$.
    We will deterministically choose a constant number of primes such that the number of false level $\ell$ candidates decreases by a factor of $2^{7}$.
    We know that a random prime will eliminate all but a $\bigO{\frac{B^{(\ell)} \log U}{P}} = O(2^{-\ell})$ fraction of false candidates.
    Thus, a natural attempt is to deterministically choose a prime $p$ that eliminates the largest fraction of false candidates (or minimize $\Phi^{(\ell)}$).
    Unfortunately, both the quantity of false candidates eliminated by a given prime $p$  (and $\Phi^{(\ell)}$) may be hard to compute, as it may involve checking whether a candidate is genuine or not (see \cite{jin2026deterministic} for further discussion).
    Instead, we find a function $Z(p)$ that is a constant-factor approximation of $\Phi^{(\ell)}$ (\Cref{lem:z-is-phi-estimate}) and we can efficiently compute (\Cref{lem:compute-z}).
    This indirect approach of finding an estimate of the desired bound that is easier to compute is similar to \cite{jin2026deterministic}, although our estimator is distinct and tailored to our specific application.
    Thus, choosing a prime $p$ that minimizes $Z(p)$ suffices to ensure that the number of false candidates decreases by a factor of $O(2^{-\ell})$.
    Iteratively choosing a constant number of such primes at level $\ell$ (each time ensuring we minimize $Z(p)$) decreases $\Phi^{(\ell)}$ by the desired $2^{7 \ell}$ factor.

    Given the above hierarchy, we proceed to search blocks in a depth-first order.
    For any given pair $(i, j)$, iterate over all level $0$ blocks.
    At each block, we will check whether it is a candidate block (as before with \cite{alman2026truly}), and if so process its children in depth-first order.
    At the final level of the hierarchy, we enumerate elements in all candidate blocks, and terminate the search for $(i, j)$ whenever a true witness is found.
    The depth-first order of the traversal ensures that at level $\ell$, at most $1$ true block witness is examined.
    Thus, the total number of level $\ell$ blocks examined at level $\ell$ is at most $O(N^2 + \Phi^{(\ell - 1)})$, since we must also examine the children of false level $\ell - 1$ candidates.
    Since $\Phi^{(\ell)}$ is decreasing in $\ell$, our algorithm makes progressively fewer queries to the matrix product computed by \cite{alman2026truly} at higher levels $\ell$, but incurs higher preprocessing cost as the inner dimension $D^{(\ell)} := B^{(\ell)} P$ is decreasing in $\ell$.
    Thus, instead of using \Cref{thm:av-min-plus}, we use the data structure version of Theorem 25 (see Corollary 31 of \cite{alman2026truly}) which allows us to obtain faster preprocessing at the cost of slower queries.

    We elaborate on the high-level approach described above below.

    \paragraph{Step 1: Finding a Good Modulus.}
    Let $t\geq1$ be a constant whose value we will determine later. 
    At each level $\ell$, we pick a set of primes $\calP^{(\ell)}$ of size $t$. 
    We say that a block $K^{(\ell)}$ is a candidate block witness for $(i,j)$ with respect to $\calP^{(\ell)}$ if it is a candidate block witness with respect to every $p\in\mathcal P^{(\ell)}$. 
    A triple $(i,j,K^{(\ell)})$ is a level-$\ell$ candidate if, for every $0\leq r\leq \ell$, the ancestor $K^{(r)}\supseteq K^{(\ell)}$ is a candidate block witness for $(i,j)$ with respect to $\calP^{(r)}$. 
    A level-$\ell$ candidate is false if $K^{(\ell)}$ is not a block witness. 
    We have the following simple claim.

    \begin{claim}
        \label{clm:level-contain}
        If $K^{(\ell)}$ is a block witness for $(i,j)$, then $(i,j,K^{(\ell)})$ is a level-$\ell$ candidate. If $\ell\geq1$ and $(i,j,K^{(\ell)})$ is a level-$\ell$ candidate, then $(i,j,K^{(\ell-1)})$ is a level-$(\ell-1)$ candidate for the unique parent $K^{(\ell-1)}\supseteq K^{(\ell)}$.
    \end{claim}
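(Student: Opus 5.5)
Both parts of \Cref{clm:level-contain} come straight from the definitions. The only facts needed are the nesting of the block hierarchy and the first part of \Cref{clm:candidate-block-witness}, which holds for any modulus.

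\emph{First statement.} Suppose $K^{(\ell)}$ is a block witness for $(i,j)$. Then there is some $k^\star\in K^{(\ell)}$ with $A[i,k^\star]+B[k^\star,j]=C[i,j]$. Fix any $0\leq r\leq \ell$ and let $K^{(r)}\supseteq K^{(\ell)}$ be the ancestor at level $r$. It exists and is unique because each level-$r$ block is partitioned into contiguous level-$(r+1)$ blocks. Since $k^\star\in K^{(r)}$, the block $K^{(r)}$ is itself a block witness. The exact equality implies $A[i,k^\star]+B[k^\star,j]\equiv C[i,j]\pmod p$ for every prime $p$, so in particular for every $p\in\calP^{(r)}$. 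This is exactly the first part of \Cref{clm:candidate-block-witness}, applied to the block $K^{(r)}$ instead of a level-$0$ block; that argument never used the block size. Hence $K^{(r)}$ is a candidate block witness with respect to $\calP^{(r)}$ for every $r\leq\ell$, and $(i,j,K^{(\ell)})$ is a level-$\ell$ candidate.

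\emph{Second statement.} Suppose $\ell\geq 1$ and $(i,j,K^{(\ell)})$ is a level-$\ell$ candidate. Let $K^{(\ell-1)}$ be the unique parent of $K^{(\ell)}$. For every $0\leq r\leq \ell-1$, the level-$r$ ancestor of $K^{(\ell-1)}$ coincides with the level-$r$ ancestor of $K^{(\ell)}$, since ancestry is transitive and ancestors at a given level are unique. By hypothesis, each of these blocks is a candidate block witness with respect to $\calP^{(r)}$. This is precisely the condition for $(i,j,K^{(\ell-1)})$ to be a level-$(\ell-1)$ candidate.

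The only point needing care is that the hierarchy really is nested with unique ancestors. If $B$ is a power of two, each level-$0$ block of size at most $B$ splits into at most $2^\ell$ contiguous sub-blocks of size at most $B/2^\ell$. I will take the splitting to be the consistent halving, so that the level-$\ell$ partition refines the level-$(\ell-1)$ partition. Beyond this, no real obstacle is expected.
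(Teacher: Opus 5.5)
Your proposal is correct and follows the paper's argument: an exact witness $k\in K^{(\ell)}$ lies in every ancestor block and satisfies the congruence modulo every prime, and the second assertion is immediate from the definition because the parent's ancestors are exactly the ancestors of $K^{(\ell)}$ at levels below $\ell$. Your explicit remark that the partitions must be nested (consistent halving) states an assumption the paper uses implicitly when it refers to the unique parent.
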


    \begin{proof}[Proof of \Cref{clm:level-contain}]
        If $k\in K^{(\ell)}$ satisfies $\Delta_{ikj}=0$, then it belongs to every ancestor of $K^{(\ell)}$ and satisfies $\Delta_{ikj}\equiv0\pmod p$ for every prime $p$. 
        Thus every ancestor passes the tests for its level. The second assertion follows directly from the definition.
    \end{proof}

    For a given triple $(i, j, K^{(\ell)})$ and modulus $d$, let 
    \begin{equation*}
        F(i, j, K^{(\ell)}, d) = \left|\set{k \in K^{(\ell)} \mid \Delta_{ikj} \neq 0 \wedge \Delta_{ikj} \equiv 0 \mod d}\right| 
    \end{equation*}
    denote the number of false witnesses for $(i, j)$ in $K^{(\ell)}$ with respect to $d$.
    We define the following potential function
    \begin{equation*}
        \Phi^{(\ell)} = \sum_{i, j, K^{(\ell)}} \prod_{p \in \calP^{(\ell)}} F(i, j, K^{(\ell)}, p) \text{.}
    \end{equation*}
    We claim that this potential upper bounds the number of false level $\ell$ candidates.

    \begin{claim}
        \label{clm:potential-ub}
        The total number of false level $\ell$ candidates is at most $\Phi^{(\ell)}$.
    \end{claim}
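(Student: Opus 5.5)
The plan is a term-by-term comparison: I will show that every false level-$\ell$ candidate $(i,j,K^{(\ell)})$ contributes at least $1$ to the sum defining $\Phi^{(\ell)}$. Since every term of that sum is a product of nonnegative integers, hence nonnegative, this bounds the number of false candidates by $\Phi^{(\ell)}$.

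Fix a false level-$\ell$ candidate $(i,j,K^{(\ell)})$. By the definition of a level-$\ell$ candidate (taking $r=\ell$), the block $K^{(\ell)}$ is a candidate block witness for $(i,j)$ with respect to every $p\in\calP^{(\ell)}$. I discard the conditions on the ancestors, since they only make the set of candidates smaller. So for each $p\in\calP^{(\ell)}$ there is some $k\in K^{(\ell)}$ with $\Delta_{ikj}\equiv 0 \pmod p$.

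Because the candidate is false, $K^{(\ell)}$ is not a block witness, so $\Delta_{ikj}\neq 0$ for every $k\in K^{(\ell)}$. Hence the $k$ found for each $p$ satisfies both $\Delta_{ikj}\neq 0$ and $\Delta_{ikj}\equiv 0 \pmod p$, which gives $F(i,j,K^{(\ell)},p)\geq 1$. Multiplying over the $t$ primes in $\calP^{(\ell)}$ yields
\begin{equation*}
\prod_{p\in\calP^{(\ell)}} F(i,j,K^{(\ell)},p)\geq 1 .
\end{equation*}

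Summing over all triples then gives
\begin{equation*}
\#\{\text{false level-}\ell\text{ candidates}\} \leq \sum_{\text{false}} \prod_{p} F \leq \sum_{i,j,K^{(\ell)}} \prod_{p} F = \Phi^{(\ell)} .
\end{equation*}
There is no real obstacle here. The only point needing care is that the definition of $F$ counts only $k$ with $\Delta_{ikj}\neq 0$, and falseness is exactly what ensures the modular witness $k$ is of that kind.
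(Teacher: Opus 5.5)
Your proposal is correct and follows the paper's argument. For a false level-$\ell$ candidate, each $p\in\calP^{(\ell)}$ has some $k\in K^{(\ell)}$ with $\Delta_{ikj}\equiv 0 \pmod p$ and $\Delta_{ikj}\neq 0$, so every factor $F(i,j,K^{(\ell)},p)$ is at least $1$. Because the terms of $\Phi^{(\ell)}$ are nonnegative integers, you correctly conclude that each false candidate contributes at least $1$ to the sum, which is slightly more explicit than the paper's ``product $>0$''.
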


    \begin{proof}[Proof of \Cref{clm:potential-ub}]
        Suppose $(i, j, K^{(\ell)})$ is a false level $\ell$ candidate.
        Then, for each $p \in \calP^{(\ell)}$, there exists $k \in K^{(\ell)}$ such that $\Delta_{ikj} \equiv 0 \mod p$ and $\Delta_{ikj} \neq 0$.
        In particular, we have $F(i, j, K^{(\ell)}, p) > 0$ for all $p \in \calP^{(\ell)}$, so that $\prod_{p} F(i, j, K^{(\ell)}, p) > 0$.
    \end{proof}

    Thus, it suffices to bound $\Phi^{(\ell)}$.
    Towards this goal, our goal at each level $\ell$ is to choose $\calP^{(\ell)}$ satisfying the following invariant
    \begin{equation}
        \label{eq:potential-bound}
        \Phi^{(\ell)} = \bigO{\frac{N^3}{B^{(\ell)}} \left(\frac{B^{(\ell)} \log U}{P} \right)^{t}} = \bigO{\frac{N^3 B^{t - 1} \log^{t} U}{2^{\ell(t - 1)} P^{t}}} \text{.}
    \end{equation}
    As in \Cref{thm:faster-min-plus}, we set $P \gg B \log U$ for some sufficiently large constant so that
    \begin{equation*}
        \Phi^{(\ell)} = \bigO{\frac{N^3 2^{\ell}}{B} 2^{- \ell t}} = \bigO{\frac{N^3}{B 2^{\ell(t - 1)}}} \text{.}
    \end{equation*}

    \begin{lemma}
        \label{lem:finding-good-primes}
        Let $\ell\geq0$ and let $t\geq1$ be a constant. 
        There exists a deterministic algorithm that finds a set $\calP^{(\ell)}$ consisting of $t$ distinct primes in $[P,2P]$ such that $\Phi^{(\ell)}$ satisfies \Cref{eq:potential-bound}, in time
        \begin{equation*}
            \bigtO{ P^{t+1} \log^{t + 1}(U) \TMUL \left(N,N(B^{(\ell)})^{t - 1},N\right)} \text{.}
        \end{equation*}
    \end{lemma}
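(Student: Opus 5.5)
The plan is to choose the $t$ primes by the method of conditional expectations, using an efficiently computable surrogate for $\Phi^{(\ell)}$.

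\textbf{Averaging.} Expand
\begin{equation*}
\prod_{p \in \calP^{(\ell)}} F(i,j,K^{(\ell)},p) = \sum_{k_1,\dots,k_t \in K^{(\ell)}} \prod_{s=1}^{t} \ind{\Delta_{ik_sj}\neq 0 \wedge \Delta_{ik_sj}\equiv 0 \bmod p_s}.
\end{equation*}
If $p_1,\dots,p_t$ are independent uniform primes from $\calP$, the factors are independent. By the argument of \Cref{clm:candidate-block-witness}, each factor has expectation $O(\log U/P)$. Summing over the $(B^{(\ell)})^t$ tuples and the $O(N^3/B^{(\ell)})$ triples $(i,j,K^{(\ell)})$ gives
\begin{equation*}
\EX[\Phi^{(\ell)}] = O\!\left(\frac{N^3}{B^{(\ell)}}\left(\frac{B^{(\ell)}\log U}{P}\right)^t\right).
\end{equation*}
This is exactly the target bound \Cref{eq:potential-bound}. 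Distinctness of the primes costs only a constant factor: the $O(1/|\calP|)$ mass on collisions is negligible, or we can sample without replacement.

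\textbf{Conditional expectations.} We fix $p_1,\dots,p_t$ one at a time. At stage $s$, with $p_1,\dots,p_{s-1}$ fixed, we pick the $p_s \in \calP$ minimizing the conditional expectation
\begin{equation*}
Z(p_s) = \EX_{p_{s+1},\dots,p_t}\left[\Phi^{(\ell)} \,\middle|\, p_1,\dots,p_s\right].
\end{equation*}
This $Z$ is not directly computable, because each future factor $\Pr_{p}[p \mid \Delta]$ depends on how many primes in $[P,2P]$ divide $\Delta$. So we replace each unfixed factor by its upper bound $c\log U/P \cdot \ind{\Delta\neq 0}$. The result is a pessimistic estimator that is still $\geq$ the true conditional expectation. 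It starts at most the target bound, and minimizing over $p_s$ does not increase it, since the minimum is at most the average over $p_s$. Each stage therefore preserves the bound, and the final $\Phi^{(\ell)}$ is at most the final estimator, which satisfies \Cref{eq:potential-bound}.

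\textbf{Computing the estimator.} For a fixed $\ell$, fixed $p_1,\dots,p_s$, and each candidate $p_s$, we need, for every $(i,j,K^{(\ell)})$,
\begin{equation*}
\prod_{r\leq s} F(i,j,K^{(\ell)},p_r)\cdot G(i,j,K^{(\ell)})^{t-s},
\end{equation*}
where $G$ counts $k \in K^{(\ell)}$ with $\Delta_{ikj}\neq 0$. Each $F$ is a counting product: with the same indicator encoding as in \Cref{lem:candidate-block-wit-ver}, it counts $k$ with $\Delta\equiv 0 \bmod p$, minus the count of $k$ with $\Delta=0$. The subtracted term needs exact-equality counting, which we encode by splitting values into residues modulo enough primes of size $O(P)$ (CRT), multiplying the number of indicator coordinates by a $\log U$-type factor.

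The expanded product over tuples $(k_1,\dots,k_t)$ is what gives the stated cost. We rewrite the sum of products as a single matrix product whose inner index ranges over tuples $(k_1,\dots,k_t)$ together with residues. Grouping by $K^{(\ell)}$, the inner dimension is $N (B^{(\ell)})^{t-1}$ times residue factors $P^{t}\log^{t}U$. We then iterate over the $O(P)$ choices of $p_s$ and the $t=O(1)$ stages, which yields $\bigtO{P^{t+1}\log^{t+1}U\,\TMUL(N,N(B^{(\ell)})^{t-1},N)}$.

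\textbf{Main obstacle.} The hardest part is this tensorization. Products of the form $\prod_s \ind{\cdot}$ over a $t$-tuple in one block must be written as one rectangular product of those dimensions, and the condition ``$\Delta\neq 0$'' inside each factor must be handled. For the latter, I would first try inclusion–exclusion: replace $\ind{\Delta\neq 0 \wedge p\mid\Delta}$ by $\ind{p\mid\Delta}-\ind{\Delta=0}$, and expand the product into $2^t=O(1)$ signed terms, each a pure residue-counting product.
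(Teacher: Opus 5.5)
\textbf{There is a genuine gap: the estimator you propose to minimize cannot be computed at the claimed cost.} Your averaging bound and pessimistic-estimator logic are fine. They match the paper's greedy scheme, which simply bounds your $G$ by $B^{(\ell)}$ and loses a factor $2$ per step. The gap is in \emph{computing} the estimator.

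Both $F=C-Z$ (for the already fixed primes) and $G$ require counting the $k$ with $\Delta_{ikj}=0$ exactly. Your CRT encoding cannot do this with a $\log U$-type overhead:
\begin{itemize}
\item Using auxiliary moduli $p'_1,\dots,p'_m$ separately only yields $\sum_k\sum_a\ind{p'_a\mid\Delta_{ikj}}$. From this you cannot recover the number of $k$ with \emph{all} residues zero.
\item Using them jointly means working in $\Z[z]/(z^M-1)$ with $M=\prod_a p'_a>4U$. CRT factors this ring as $\Z[z_1,\dots,z_m]/(z_a^{p'_a}-1)$ but does not shrink it, so the overhead is $\Omega(U)$, polynomial in $N$, not polylogarithmic.
\end{itemize}
As a sanity check: if exact-zero counts were available at polylogarithmic overhead, one product would give $\sum_k\ind{\Delta_{ikj}=0}$ for every $(i,j)$. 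That would solve Witness Equality outright in roughly $N^{\omega}$ time.

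Your inclusion--exclusion does not remove the problem. The signed terms containing $\ind{\Delta=0}$ are exactly the ones that are not small-modulus residue counts. Nor can you drop the condition $\Delta\neq 0$, i.e.\ use $C$ in place of $F$. Then the up to $N^3$ exact witnesses enter the estimator as weights that no choice of primes reduces. For example, the second greedy step only certifies $\sum F_1F_2\leq \sum C_1\cdot O(B^{(\ell)}\log U/P)$ with $\sum C_1\geq \sum Z$, which is polynomially above the target.

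\textbf{The missing idea is the paper's surrogate, which excludes zeros without ever testing equality.}
\begin{enumerate}
\item Let $\calQ$ be the first $2tH$ primes, where $H=\lceil\log(4U)\rceil$.
\item Replace each factor $F(i,j,K^{(\ell)},p)$ by $C(i,j,K^{(\ell)},p)-C(i,j,K^{(\ell)},pq)$. This counts $k$ with $p\mid\Delta_{ikj}$ and $q\nmid\Delta_{ikj}$, so it automatically excludes $\Delta_{ikj}=0$.
\item Summing the product over $q\in\calQ$ gives $Z_{s+1}(p')$ with $\frac{|\calQ|}{2}\Phi^{(\ell)}_{s+1}\leq Z_{s+1}(p')\leq |\calQ|\Phi^{(\ell)}_{s+1}$. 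Every counted tuple is bad. Conversely, the at most $t$ nonzero $\Delta$'s of a bad tuple have at most $tH$ prime divisors in total, so at least half of $\calQ$ counts it.
\item Every factor is now a residue count modulo $p$ or $pq$. Expanding $\prod_u\bigl(C(\cdot,p_u)-C(\cdot,p_uq)\bigr)$ gives $2^{s+1}$ signed terms. Each is a single product over $\Z[z_1,\dots,z_{s+1}]/(z_1^{r_1}-1,\dots,z_{s+1}^{r_{s+1}}-1)$ with $r_u\in\{p_u,p_uq\}$. The ring has size $\tilde O(P^{s+1}\log^{s+1}U)$ and the inner dimension is $N(B^{(\ell)})^{s}$.
\item Iterating over the $O(P)$ candidates $p'$, the $O(\log U)$ values of $q$, and the sign patterns gives the stated running time.
\end{enumerate}
Choosing $p^*$ to minimize $Z_{s+1}(p')$ then yields $\Phi^{(\ell)}_{s+1}(p^*)\leq 2\min_{p'}\Phi^{(\ell)}_{s+1}(p')\leq 2\,\EX_{p'}\bigl[\Phi^{(\ell)}_{s+1}(p')\bigr]$. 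This last step is where your averaging argument enters.
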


    \begin{proof}[Proof of \Cref{lem:finding-good-primes}]
        Note that $|\Delta_{ikj}|\leq4U$, so every nonzero $\Delta_{ikj}$ has at most $H:=\lceil\log(4U)\rceil$ distinct prime factors. 
        Let $Q$ consist of the first $2tH$ primes. 
        We will ensure a choice of $P$ such that for sufficiently large $N$, these primes are $<P$. 
        We construct $\calQ$ deterministically in $\widetilde O(H)$ time. 
        
        We will construct $\calP^{(\ell)}$ iteratively, beginning with $\calP^{(\ell)}_{0} = \emptyset$.
        Throughout the construction, we choose the next prime from $\mathcal P\setminus\mathcal P_s^{(\ell)}$, so all chosen primes are distinct.
        For $0 \leq s \leq t$, define the potential
        \begin{equation*}
            \Phi^{(\ell)}_{s} := \sum_{i, j, K^{(\ell)}} \prod_{p \in \calP^{(\ell)}_{s}} F(i, j, K^{(\ell)}, p) 
        \end{equation*}
        so that $\Phi^{(\ell)} = \Phi^{(\ell)}_{t}$.
        We maintain the following invariant:
        \begin{equation}
            \label{eq:potential-loop-invariant}
            \Phi^{(\ell)}_{s} = \bigO{\frac{N^3}{B^{(\ell)}} \left(\frac{B^{(\ell)} \log U}{P} \right)^{s}}
        \end{equation}
        which is trivially satisfied at $s = 0$ since there are $O(N^3/B^{(\ell)})$ triples $(i, j, K^{(\ell)})$.
        Given $\calP^{(\ell)}_{s}$, we describe how to choose $p_{s + 1}$ to obtain $\calP^{(\ell)}_{s + 1}$.
        At a high level, our goal is to choose $p$ so that $\Phi^{(\ell)}_{s + 1}$ is minimized.
        However, since it is not clear how to compute $\Phi^{(\ell)}_{s}$ efficiently directly, we must take an alternative approach.
        
        Define
        \begin{align*}
            C(i, j, K^{(\ell)}, p) &= \left|\set{k \in K^{(\ell)} \mid \Delta_{ikj} \equiv 0 \mod p} \right| \\
            Z(i, j, K^{(\ell)}) &= \left|\set{k \in K^{(\ell)} \mid \Delta_{ikj} = 0} \right| 
        \end{align*}
        so that
        \begin{equation*}
            F(i, j, K^{(\ell)}, p) = C(i, j, K^{(\ell)}, p) - Z(i, j, K^{(\ell)}) \text{.}
        \end{equation*}
        For any fixed $p'\in\calP \setminus \calP_s^{(\ell)}$, let $\calP_{s+1}^{(\ell)}=\calP_s^{(\ell)}\cup\{p'\}$ and define
        \[
         Z_{s+1}(p'):=
         \sum_{q\in \calQ}\sum_{i,j,K^{(\ell)}}
         \prod_{p\in \calP_{s+1}^{(\ell)}}
         \left(C(i,j,K^{(\ell)},p)-C(i,j,K^{(\ell)},pq)\right).
        \]
        Here $\calP_{s+1}^{(\ell)}$, and hence $\Phi_{s+1}^{(\ell)}$, depends on $p'$. We first show that $Z_{s+1}(p')$ approximates $\Phi_{s+1}^{(\ell)}$, and then describe how to compute it.

        \begin{lemma}
            \label{lem:z-is-phi-estimate}
            Let $\calP^{(\ell)}_{s}$ be a set of $s$ distinct primes. 
            Then, for any prime $p' \in \calP \setminus \calP^{(\ell)}_{s}$, 
            \begin{equation*}
                \frac{|\calQ|}{2} \Phi^{(\ell)}_{s + 1} \leq Z_{s + 1}(p') \leq |\calQ| \Phi^{(\ell)}_{s + 1} \text{.}
            \end{equation*}
        \end{lemma}

        \begin{proof}[Proof of \Cref{lem:z-is-phi-estimate}]
            Fix a triple $(i, j, K^{(\ell)})$.
            We begin by examining $\Phi^{(\ell)}_{s + 1}$.
            The product $\prod_{p \in \calP^{(\ell)}_{s + 1}} F(i, j, K^{(\ell)}, p)$ counts tuples $(k_1, \dots, k_{s + 1}) \in K^{(\ell)}$ such that $\Delta_{ik_{u}j} \neq 0$ and $\Delta_{i k_{u} j} \equiv 0 \mod p_{u}$ for $1 \leq u \leq s + 1$.
            We say such a tuple $(i, j, K^{(\ell)}, k_1, \dots, k_{s + 1})$ is \emph{bad}, so that $\Phi^{(\ell)}_{s + 1}$ counts exactly the number of bad tuples.

            Now, let us examine $Z_{s + 1}(p')$.
            Observe
            \begin{align*}
                C(i, j, K^{(\ell)}, p) - C(i, j, K^{(\ell)}, p q) &= \left| \set{k \in K^{(\ell)} \mid \Delta_{ikj} \equiv 0 \mod p \wedge \Delta_{ikj} \not\equiv 0 \mod q} \right| 
            \end{align*}
            Since $p \neq q$, no triple $\Delta_{ikj} = 0$ contributes to the difference.
            Furthermore, $\prod_{p} \left( C(i, j, K^{(\ell)}, p) - C(i, j, K^{(\ell)}, p q) \right)$ counts tuples $(k_1, \dots, k_{s + 1})$ such $\Delta_{ik_uj} \not\equiv 0 \mod q$ but $\Delta_{ik_uj} \equiv 0 \mod p_{u}$ for $u \in [s + 1]$.
            We say a tuple $(i, j, K^{(\ell)}, k_1, \dots, k_{s + 1})$ is counted by $q$ if it satisfies the above condition.

            By definition, for any $q \in \calQ$, every tuple that is counted by $q$ is bad, yielding the upper bound.
            To argue the lower bound, it suffices to show that any bad tuple is counted by at least $1/2$ fraction of $\calQ$.
            To see this, observe that each $|\Delta_{i k_u j}| \leq 4 U$ and therefore has at most $H = O(\log U)$ distinct prime factors.
            Combined, there are at most $t H$ distinct primes that factor at least one of of $\Delta_{i k_1 j}, \dots, \Delta_{i k_{s + 1} j}$.
            Since $|\calQ| \geq 2 t H$, there are at least $t H$ primes $q \in \calQ$ that do not divide any of $\Delta_{i k_{u} j}$, so that any bad tuple is counted by at least $t H$ of the primes in $\calQ$.
        \end{proof}

        Next, we describe how to estimate $Z_{s + 1}(p')$.

        \begin{lemma}
            \label{lem:compute-z}
            For every constant $s\geq0$, there is a deterministic algorithm that computes $Z_{s+1}(p')$ for all $p'\in\mathcal P\setminus\mathcal P_s^{(\ell)}$ in time
            \[
             \tO{P^{s+2} \log^{s + 2} U}
            \TMUL \left(N,N(B^{(\ell)})^s,N\right) \text{.}
            \]
        \end{lemma}

        \begin{proof}[Proof of \Cref{lem:compute-z}]
            We compute $Z_{s+1}(p')$ separately for each $p'\in\mathcal P\setminus\mathcal P_s^{(\ell)}$. 
            Fix such a $p'$, write $\mathcal P_s^{(\ell)}=\{p_1,\ldots,p_s\}$, and let $p_{s+1}=p'$.
            
            Fix $q \in \calQ$ and a triple $(i, j, K^{(\ell)})$.
            For a set $S \subseteq [s + 1]$, we let $r_u = p_u$ if $u \not\in S$ and $p_u q$ otherwise.
            We may rewrite the product 
            \begin{equation*}
                \prod_{p} \left( C(i, j, K^{(\ell)}, p) - C(i, j, K^{(\ell)}, p q) \right) = \sum_{S \subseteq [s + 1]} (-1)^{|S|} \prod_{u = 1}^{s + 1} C(i, j, K^{(\ell)}, r_u) \text{.}
            \end{equation*} 
            For a fixed set $S$, we will compute
            \begin{equation}
                \label{eq:tuple-sum}
                (-1)^{|S|} \sum_{i, j, K^{(\ell)}} \prod_{u = 1}^{s + 1} C(i, j, K^{(\ell)}, r_u) \text{.}
            \end{equation}

            To do so, we construct matrices over the ring $\Z[z_1, \dots, z_{s + 1}]/(z_{1}^{r_{1}} - 1, \dots, z_{s + 1}^{r_{s + 1}} - 1)$ as follows
            \begin{align*}
                X[i, (K^{(\ell)}, k_1, \dots, k_{s + 1})] &= \prod_{u = 1}^{s + 1} z_{u}^{A[i, k_{u}] \mod r_{u}} \\
                Y[(K^{(\ell)}, k_1, \dots, k_{s + 1}), j] &= \prod_{u = 1}^{s + 1} z_{u}^{B[k_{u}, j] \mod r_{u}} \\
            \end{align*}
            The outer dimension of the matrices is $[N]$ while the inner dimension is
            \begin{equation*}
                \bigO{\frac{N}{B^{(\ell)}} \cdot (B^{(\ell)})^{s + 1}} = \bigO{\frac{N B^{s}}{2^{s \ell}}} \text{.}
            \end{equation*}
            The polynomials in the ring have degree less than $r_{u}$ in $z_{u}$ so that the total number of monomials is at most $R := \prod_{u=1}^{s+1} r_{u}$.
            Since every $r_u$ is either $p_u$ or $p_u q$, where $p_u\in[P,2P]$ and $q$ is $\tO{\log U}$, we have $R=\bigtO{P^{s + 1} \log^{s + 1}(U)}$. 
            For constant $s$, ring multiplication can be performed in $\tO{R}$ time by multivariate polynomial multiplication followed by reduction modulo $z_u^{r_u}-1$. 
            Thus the matrix product can be computed in time $\bigtO{P^{s + 1} \log^{s + 1}(U)} \TMUL \left(N,N(B^{(\ell)})^s,N \right)$.
            The computed entries are of the form
            \begin{equation*}
                (XY)[i, j] = \sum_{K^{(\ell)}, (k_1, \dots, k_{s + 1})} \prod_{u = 1}^{s + 1} z_{u}^{A[i, k_u] + B[k_u, j] \mod r_{u}} \text{.}
            \end{equation*}
            In particular, the coefficient of $\prod_{u = 1}^{s + 1} z_{u}^{C[i, j] \mod r_{u}}$ is exactly the number of tuples $(k_1, \dots, k_{s + 1})$ such that $\Delta_{i k_u j} \equiv 0 \mod r_{u}$ for all $1 \leq u \leq s + 1$.
            In particular, summing the coefficients of $\prod_{u = 1}^{s + 1} z_{u}^{C[i, j] \mod r_{u}}$ multiplied by $(-1)^{|S|}$ over all entries $(XY)[i, j]$, we obtain exactly the desired sum in \Cref{eq:tuple-sum}.
            Iterating over $O(P)$ candidate primes $p'$, all $q\in Q$, and all $2^{s+1}$ choices of $S$, we compute every $Z_{s+1}(p')$ in time
            \[
            \tO{P^{s+2} \log^{s + 2} U}
            \TMUL \left(N,N(B^{(\ell)})^s,N\right) \text{,}
            \]
            as desired.
        \end{proof}

        The above discussion motivates the following algorithm. Given
        $\calP_s^{(\ell)}$, compute $Z_{s+1}(p')$ for every
        $p'\in\mathcal P\setminus\mathcal P_s^{(\ell)}$, choose
        \[
         p^*\in\operatorname*{arg\,min}_{
         p'\in\mathcal P\setminus\mathcal P_s^{(\ell)}}Z_{s+1}(p'),
        \]
        and set $\mathcal P_{s+1}^{(\ell)}=
        \mathcal P_s^{(\ell)}\cup\{p^*\}$. 
        By \Cref{lem:compute-z}, summing over $0\leq s<t$ gives the time bound in \Cref{lem:finding-good-primes}.
        It remains to prove that the chosen primes satisfy \Cref{eq:potential-bound}.
        
        \begin{claim}
            \label{clm:greedy-valid}
            The above choice of $p^*$ maintains \Cref{eq:potential-loop-invariant}.
        \end{claim}
        \begin{proof}
            Fix $\mathcal P_s^{(\ell)}$. 
            For a uniformly chosen prime $p'\in\mathcal P\setminus\mathcal P_s^{(\ell)}$, \Cref{clm:candidate-block-witness} gives
            \begin{equation*}
                \Exp \left[ \Phi_{s+1}^{(\ell)}(p') \right] = \bigO{\frac{B^{(\ell)} \log U}{P}} \Phi^{(\ell)}_{s} \text{.}
            \end{equation*}
            Indeed, each tuple counted by $\Phi_s^{(\ell)}$ can be extended in at most
            $B^{(\ell)}$ ways, and a fixed nonzero difference is divisible by $p'$ with
            probability $O(\log U/P)$. 
            Excluding the constant number of already chosen primes changes this bound by only a constant factor.
            
            By \Cref{lem:z-is-phi-estimate} and the definition of $p^*$,
            \begin{equation*}
                \Phi_{s+1}^{(\ell)}(p^*) \leq \frac{2Z_{s+1}(p^*)}{|Q|} \leq \min_{p'}2 \Phi_{s+1}^{(\ell)}(p') \leq 2 \Exp_{p'} \left[ \Phi_{s+1}^{(\ell)}(p') \right] \text{.}
            \end{equation*}
            Starting from $\Phi_0^{(\ell)}=O(N^3/B^{(\ell)})$ and applying this bound
            at most constant $t$ times proves \Cref{eq:potential-loop-invariant}, and hence \Cref{eq:potential-bound}. 
        \end{proof}

        This concludes the proof of \Cref{lem:finding-good-primes}.
    \end{proof}

    \paragraph{Step 2: Finding and Verifying a Sparse Certificate.}
    \Cref{lem:finding-good-primes} allows us to find a set of good primes.
    We now describe how to use the set of good primes to compute a small sparse certificate (i.e. find the set of level $\ell$ candidates).
    In contrast to the randomized algorithm, we will interleave the steps of finding and verifying the sparse certificate.
    Concretely, we will use the data structure of \cite[Corollary~31]{alman2026truly} to test candidate blocks during a depth-first search. 

    \begin{theorem}[Corollary 31 of \cite{alman2026truly}]
        \label{thm:av-query}
        Let $c > 10$ and $0 < \theta < 0.9$, let $\rho_c := 9/(c - 1)$, and let
        \begin{equation*}
            g := \frac{\theta \ln(1/\rho_c)}{\ln 4} > 0 \text{ and } q := \frac{H(\theta) + \theta \ln 9}{\ln 4} \text{.}
        \end{equation*}
        Given as input matrices $X \in \Z^{N \times D}$ and $Y \in \Z^{D \times N}$ whose entries are integers of absolute value at most $N^{O(1)}$, where $2 \leq D \leq N^{\varepsilon}$ and $\varepsilon < R_{c}(g)$ where
        \begin{equation*}
            R_c(g) := \frac{\ln 4}{c \ln 10 - \frac{c}{2} H(1/c) - (c - 1) \ln 3 + g \ln 4} \text{,}
        \end{equation*}
        we can preprocess them deterministically in $O(N^2 \log^2 D/D^{g})$ time and space, after which any single entry $(XY)[i, j]$ in can be computed deterministically in $O(D^{q} \log D)$ time. 
        The constants hidden in the $O()$ depend on $c, \theta, \varepsilon$.
        Moreover, $R_c(0)$ increases to $\varepsilon^* = \ln 4/(5 \ln 10) = 0.1204\dots$ as $c$ decreases to $10$.
    \end{theorem}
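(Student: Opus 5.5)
This statement is imported from \cite{alman2026truly}, so the plan is not to reprove it from scratch. Instead we would reopen the proof of \Cref{thm:av-min-plus} (Theorem 25 there) and split its work into an input-independent preprocessing phase and a per-entry query phase. The exponents in the statement suggest how the split goes. Write $D = 4^m$, so that an inner index is a string of $m$ base-$4$ digits. Then
\begin{equation*}
    D^{q} = \binom{m}{\theta m}^{1+o(1)} 9^{\theta m}
\end{equation*}
counts the ways to choose $\theta m$ \emph{special} coordinates and assign each one of $9$ local terms. Likewise $D^{-g} = \rho_c^{\theta m}$ is a saving factor of $\rho_c$ per special coordinate. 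So the plan is to expand $(XY)[i,j]$ coordinatewise as a sum of products of local pieces. The terms are graded by how many coordinates use the \emph{exceptional} piece; there are $9$ choices per coordinate, which accounts for $\ln 9$.

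\textbf{Step 1 (decomposition).} From the Theorem 25 argument, take the tensor identity with parameter $c$. In it, the inner product over $\{0,\dots,3\}^m$ equals a main term plus correction terms, indexed by the set $S\subseteq[m]$ of coordinates where the correction is used. The main term is the part evaluated by one rectangular product of $N\times N^{o(1)}$-type matrices. This is where the constraint $\varepsilon < R_c(\cdot)$ comes from: the denominator $c\ln 10 - \tfrac{c}{2}H(1/c) - (c-1)\ln 3$ is the log-size of the intermediate dimension per base-$4$ digit. The $g\ln 4$ shift in $R_c(g)$ accounts for the additional $D^{g}$-fold structure that must be handled within the fast product.

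\textbf{Step 2 (preprocessing).} Group all terms with $|S|\geq\theta m$ together with the main term, and compute them for \emph{all} pairs $(i,j)$ at once. Each extra special coordinate contributes a factor $\rho_c = 9/(c-1)$ to the cost. Summing over these terms then gives total time $O(N^2\log^2 D/D^{g})$, provided the inner dimension stays within the range where rectangular multiplication is essentially quadratic, which is the condition $\varepsilon<R_c(g)$. The results are stored in an $N\times N$ table.

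\textbf{Step 3 (query).} For a query $(i,j)$, read the stored value and add the terms with $|S|<\theta m$ by brute force. Each such term is a product of local factors computable from row $i$ of $X$ and column $j$ of $Y$ via precomputed per-row and per-column transforms. The number of terms is $\sum_{s<\theta m}\binom{m}{s}9^{s} = O(D^{q})$ for $\theta<0.9$, since in that range the sum is dominated by its last term. The $\log D$ factor covers arithmetic on the terms. The limit $R_c(0)\to\varepsilon^*$ as $c\downarrow 10$ is a direct calculation: at $c=10$, $10\ln 10 - 5H(1/10) - 9\ln 3 = 5\ln 10$.

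\textbf{Main obstacle.} The hard step is Step 2: showing that the high-weight corrections can be batched so that they actually save $D^{-g}$, rather than being merely counted. It must be checked that the $\rho_c$ bound in Theorem 25's proof applies per coordinate and multiplicatively across the special set. This will be done first by tracking, in that proof, where $\rho_c$ arises, namely the ratio $9/(c-1)$ of correction mass to main-term mass per coordinate.
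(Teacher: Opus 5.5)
The paper gives no proof of this statement. It is quoted from Corollary 31 of \cite{alman2026truly} and used as a black box, so the citation you open with is the entire argument here. The reconstruction you then sketch, however, cannot give the stated bounds. In Step 2 you compute the main term plus the high-weight corrections ``for all pairs $(i,j)$ at once'' and store them ``in an $N\times N$ table'', and in Step 3 a query reads that table. Writing or storing $N^2$ values already costs $\Omega(N^2)$ time and space. The statement promises $O(N^2\log^2 D/D^{g})$ time \emph{and space}, which is $o(N^2)$ once $D^g\gg\log^2 D$. The paper depends on this sub-quadratic preprocessing: in \Cref{lem:level-data-structure} it builds $\Theta(NP/B^{(\ell)})=\Theta(N\cdot 2^{\ell}\log U)$ such structures, so quadratic preprocessing per structure would already cost $\Omega(N^3)$. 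Likewise, Theorem 25 computes the entries in $W$ in $O(N^2\log^2 D/D^\gamma)$ time, which is $o(N^2)$ for large $D$. So the argument you propose to ``split'' never materializes $XY$, or any table indexed by all output pairs.

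The missing idea is structural. The preprocessing must produce a compressed object of total size $O(N^2\log^2 D/D^g)$ that is not a per-pair table. Each query must then recover $(XY)[i,j]$ from that object within $O(D^q\log D)$ time, rather than starting from a stored per-pair value and adding light corrections. The statement itself hints at this: since $1/R_c(g)=E_c/\ln 4+g$ with $E_c:=c\ln 10-\tfrac c2H(1/c)-(c-1)\ln 3$, the hypothesis $\varepsilon<R_c(g)$ is the $g=0$ hypothesis with $N$ replaced by $N/D^{g}$. Step 2 also asserts rather than derives that batching the exponentially many terms with $|S|\geq\theta m$ gives a net saving of $\rho_c^{\theta m}=D^{-g}$. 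You defer exactly this as the ``main obstacle'', and it is the whole content of the corollary. Your exponent bookkeeping does check out. With $D=4^m$, $\sum_{s<\theta m}\binom{m}{s}9^s=O_\theta(D^q)$ because consecutive terms grow when $\theta<0.9$, and $E_{10}=5\ln 10$ is correct. For the ``increases to $\varepsilon^*$'' clause you also need monotonicity, which follows from $\tfrac{d}{dc}E_c=\ln\tfrac{10}{3}-\tfrac12\ln\tfrac{c}{c-1}>0$ for $c>10$.
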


    \Cref{thm:av-query} allows us to trade-off slower queries for faster preprocessing.
    At level $\ell$, we let $g_{\ell} = \frac{\theta_{\ell} \ln(1/\rho_c)}{\ln 4}$ so that $\theta_{\ell} = \frac{g_{\ell} \ln 4}{\ln(1/\rho_c)}$.
    
    \begin{lemma}
        \label{lem:level-data-structure}
        Fix a level $\ell$ and the set $\mathcal P^{(\ell)}$ of $t$ primes.
        Suppose $0 < \theta_{\ell} := \frac{g_{\ell} \ln 4}{\ln(1/\rho_c)} < 0.9$, $2B^{(\ell)}P\leq N^{\varepsilon}$, and $\varepsilon < R_c(g_{\ell})$.
        Given $g_{\ell}$, let $q_{c}(g_{\ell}) := \frac{H(\theta_{\ell}) + \theta_{\ell} \ln 9}{\ln 4}$.
        
        There is a deterministic data structure that returns $C(i,j,K^{(\ell)},p)$ for any block $K^{(\ell)}$, output pair $(i,j)$, and $p\in\mathcal P^{(\ell)}$, with preprocessing time $\bigtO{N^2P^2+\frac{N^3P}{B^{(\ell)}(B^{(\ell)}P)^{g_\ell}}}$ and query time $\widetilde O((B^{(\ell)}P)^{q_c(g_{\ell})})$.
    \end{lemma}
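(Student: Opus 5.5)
We reuse the construction from the proof of \Cref{lem:candidate-block-wit-ver}, applied once per level-$\ell$ block, but we now count rather than only detect. We also replace \Cref{thm:av-min-plus} by the preprocessing/query trade-off of \Cref{thm:av-query}.

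\textbf{Construction.} Fix a level-$\ell$ block $K^{(\ell)}$, a prime $p\in\calP^{(\ell)}$, and a residue $r\in\Z_p$. As before we form the $0/1$ matrices
\begin{equation*}
X[i,(k,a)]=\ind{A[i,k]\equiv a \bmod p},\qquad Y_r[(k,a),j]=\ind{B[k,j]\equiv r-a \bmod p},
\end{equation*}
with $k\in K^{(\ell)}$ and $a\in\Z_p$. The computation in the proof of \Cref{lem:candidate-block-wit-ver} gives
\begin{equation*}
(XY_r)[i,j]=\left|\set{k\in K^{(\ell)} \mid A[i,k]+B[k,j]\equiv r \bmod p}\right|.
\end{equation*}
So whenever $C[i,j]\equiv r \bmod p$, this entry is exactly $C(i,j,K^{(\ell)},p)$. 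To answer a query $(i,j,K^{(\ell)},p)$, we compute $r=C[i,j]\bmod p$ and query the structure built for $(K^{(\ell)},p,r)$ at entry $(i,j)$.

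\textbf{Preprocessing.} For each triple $(K^{(\ell)},p,r)$ we run the preprocessing of \Cref{thm:av-query} with parameters $c,\theta_\ell$. The inner dimension is $D:=|K^{(\ell)}|\cdot p\le 2B^{(\ell)}P\le N^{\varepsilon}$, and $\varepsilon<R_c(g_\ell)$ and $0<\theta_\ell<0.9$ hold by hypothesis. If the block is small enough that $D<2$, we pad the inner dimension with zero columns and rows. The number of triples is $O(N/B^{(\ell)})\cdot t\cdot O(P)$ with $t$ constant. Each structure costs $O(N^2\log^2D/D^{g_\ell})$, so the total is $\tO{N^3P/(B^{(\ell)}(B^{(\ell)}P)^{g_\ell})}$.

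\textbf{Building the matrices, and the main obstacle.} Writing $X$ and $Y_r$ explicitly costs $O(NB^{(\ell)}P)$ per triple. Over all $r$, $p$ and blocks this is $O(N^2P^2)$, since $\sum_{K}|K^{(\ell)}|=N$. This accounts for the $N^2P^2$ term. Computing the residues $C[i,j]\bmod p$ for all $p$ takes $O(N^2)$. The accounting just described is the step I expect to be the main obstacle: the build cost must be summed over blocks, not multiplied, to land on $N^2P^2$ rather than something larger.

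\textbf{Queries.} Each query makes one lookup into \Cref{thm:av-query}, which costs $O(D^{q}\log D)$ with $q=q_c(g_\ell)$. Since $D\le 2B^{(\ell)}P$ and $q$ is constant, this is $\tO{(B^{(\ell)}P)^{q_c(g_\ell)}}$. The constants hidden in the $O(\cdot)$ depend only on $c,\theta_\ell,\varepsilon$, and there are only $O(\log B)$ levels, so they remain bounded.
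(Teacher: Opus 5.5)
Your proposal is correct and is essentially the paper's proof: one \Cref{thm:av-query} structure per $(K^{(\ell)},p,r)$, queried at $r=C[i,j]\bmod p$, with the same $O(N^2P^2)$ construction cost and $O(NP/B^{(\ell)})$-structure preprocessing count. Two small fixes. First, pad every block to inner dimension exactly $B^{(\ell)}p$, as the paper does, rather than only when $D<2$ (a condition that never arises for a nonempty block, since $D\geq p\geq P$): preprocessing cost decreases in $D$, and padding is what justifies charging each structure $\tO{N^2/(B^{(\ell)}P)^{g_\ell}}$ even for undersized blocks. Second, uniformity of the hidden constants across levels comes from rounding $g_\ell$ to a finite grid, not from there being only $O(\log B)$ levels.
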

    
    \begin{proof}[Proof of \Cref{lem:level-data-structure}]
    For each block $K^{(\ell)}$, prime $p\in\mathcal P^{(\ell)}$, and residue
    $r \in\mathbb Z_p$, construct the matrices from \Cref{lem:candidate-block-wit-ver}
    \[
     X[i,(k,a)]=\mathbf 1[A[i,k]\equiv a\pmod p] \text{,}\qquad
     Y_r[(k,a),j]=\mathbf 1[B[k,j]\equiv r-a\pmod p].
    \]
    The inner dimension is $D=B^{(\ell)}p$; smaller blocks are padded with
    zero coordinates in both matrices.
    For a query $(i,j,K^{(\ell)},p)$, set $r=C[i,j]\bmod p$. 
    As in \Cref{lem:candidate-block-wit-ver} we have
    \[
     (XY_r)[i,j]=C(i,j,K^{(\ell)},p) \text{.}
    \]
    Apply \Cref{thm:av-query} to preprocess each input matrix pair.
    There are $O(NP/B^{(\ell)})$ products, since $t$ is constant. 
    Constructing all input matrices takes $O(N^2P^2)$ time, and their preprocessing and query costs give the claimed bounds.
    \end{proof}
    
    We now use these data structures to search the candidate blocks in depth-first order.
    For each output pair $(i,j)$, process all level-$0$ blocks in order. 
    At each visited block $K^{(\ell)}$, use the following algorithm.
    \begin{mdframed}
        \begin{enumerate}
            \item Query $C(i,j,K^{(\ell)},p)$ for each $p\in\mathcal P^{(\ell)}$.
            If any $C(i, j, K^{(\ell)}, p) = 0$, discard and proceed to the next block.
            \item If $\ell = h$, test $\Delta_{ikj} = 0$ for all $k \in K^{(h)}$ until a witness is found. If $\Delta_{ikj} = 0$ is found, return $(i, j)$ is accurate and proceed to the next $(i, j)$ output pair.
            If no $\Delta_{ikj} = 0$ is found, discard and proceed to the next block.
            \item If $\ell < h$, recurse on the first child $(i, j, K^{(\ell + 1)})$ and then recurse on the second child (in depth-first search order).
        \end{enumerate}
    \end{mdframed}
    If the full search finishes without finding a witness $\Delta_{ikj} = 0$, return that $(i,j)$ is inaccurate.
    
    \begin{lemma}
        \label{lem:det-search}
        Let $P=\Theta(B\log U)$, where the implied constant is sufficiently large, and let $0\leq h\leq \log B$. 
        Suppose the parameters in \Cref{lem:level-data-structure} are valid for every $0\leq \ell\leq h$. 
        Then the above algorithm solves $(\min,+)$-Witness Equality deterministically in time
        \begin{align*}
            \tilde{O}_{\log^{t + 1}(U)} \left(N^\omega B^{t-1} P^{t+1} + N^2 P^2 + \frac{N^2 B}{2^{h}} + \frac{N^3}{2^{ht}} + \sum_{\ell = 0}^{h} \left( \frac{N^3 P}{B^{(\ell)} (B^{(\ell)} P)^{g_{\ell}}} + \left( N^2 + \frac{N^3}{B 2^{\ell(t - 1)}} \right) (B^{(\ell)} P)^{q_{\ell}} \right) \right)
        \end{align*}
    \end{lemma}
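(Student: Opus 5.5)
We prove \Cref{lem:det-search} in three parts: correctness of the depth-first search, the cost of choosing primes and building data structures, and the cost of the search itself, which we account for level by level.

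\textbf{Correctness.} By \Cref{clm:level-contain}, every block witness $K^{(\ell)}$ passes the test for every prime in $\calP^{(r)}$ at each of its ancestors. Hence whenever $(i,j)$ is accurate, the depth-first search never discards the path leading to the leaf block that contains a witness $k$. That leaf is therefore tested exhaustively, and the search reports $(i,j)$ as accurate. Conversely, the algorithm reports accurate only after explicitly checking $\Delta_{ikj}=0$, so it is never wrong in that direction. By \Cref{lem:level-data-structure}, the query values are exactly $C(i,j,K^{(\ell)},p)$, so the candidate tests are implemented correctly.

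\textbf{Preprocessing.} For each level $\ell\le h$ we apply \Cref{lem:finding-good-primes}, which costs $\tO{P^{t+1}\log^{t+1}U}\cdot\TMUL(N,N(B^{(\ell)})^{t-1},N)$. We bound $\TMUL(N,NM,N)\le M\cdot N^{\omega}$ by splitting the inner dimension into $M$ square products, and use $B^{(\ell)}\le B$. Summing over the $O(\log B)$ levels gives the term $N^\omega B^{t-1}P^{t+1}$ up to the $\log^{t+1}U$ and polylog factors absorbed by $\tilde O_{\log^{t+1}U}$. Next, \Cref{lem:level-data-structure} at each level contributes $N^2P^2 + N^3P/(B^{(\ell)}(B^{(\ell)}P)^{g_\ell})$. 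This yields the $N^2P^2$ term and the first summand inside the sum.

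\textbf{Search.} We count the visited triples $(i,j,K^{(\ell)})$ at each level. A level-$\ell$ block is visited only if its parent is a level-$(\ell-1)$ candidate that was not pruned. Each visited block costs $t$ queries, i.e.\ $\tilde O((B^{(\ell)}P)^{q_\ell})$ time. We classify the parents into true and false candidates.
\begin{itemize}
\item \emph{True parents.} Because the search stops for $(i,j)$ as soon as a witness is found, and descends into true witness blocks before returning, at most one true block witness per level is expanded for each $(i,j)$. This is the key point of the depth-first order, and it is the step requiring the most care. Blocks visited before the witness is found are either false candidates or are pruned at their own level, so they are charged either to the false-candidate count or to the $O(N^2)$ allowance per level (each expanded node has only two children). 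This gives $O(N^2)$ visits per level from true parents.
\item \emph{False parents.} By \Cref{clm:potential-ub} and \Cref{eq:potential-bound} with $P\gg B\log U$, the number of false level-$(\ell-1)$ candidates is at most $\Phi^{(\ell-1)}=O(N^3/(B2^{(\ell-1)(t-1)}))$. Each has two children, giving $O(N^3/(B2^{\ell(t-1)}))$ visits up to a constant factor. At level $0$, all $N^3/B$ triples are visited, which matches the same expression.
\end{itemize}
Together these produce the second summand in the sum.

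\textbf{Leaf enumeration.} At level $h$, each leaf that passes its test is enumerated at a cost of $B^{(h)}=B/2^h$. True leaves contribute at most one per $(i,j)$, giving $N^2B/2^h$ in total. False leaves number at most $\Phi^{(h)}$, contributing $\Phi^{(h)}\cdot B/2^h = O(N^3/2^{ht})$. Adding the preprocessing and search costs gives the stated bound.
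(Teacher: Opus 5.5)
Your proposal is correct and takes essentially the same route as the paper. It uses \Cref{clm:level-contain} for correctness and the depth-first order to get at most one block witness per level per $(i,j)$. It charges each visited level-$\ell$ block to its expanded parent, which is one of at most $N^2$ block witnesses or one of at most $\Phi^{(\ell-1)}$ false candidates, bounds the leaf scans by $(N^2+\Phi^{(h)})B^{(h)}$, and sums \Cref{lem:finding-good-primes} and \Cref{lem:level-data-structure} over the $O(\log B)$ levels. The only blemish is the sentence in your ``true parents'' bullet about charging pruned blocks ``at their own level''; the clean statement is that every visited block is charged to its parent, which is what your two bullets together establish.
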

    
    \begin{proof}[Proof of \Cref{lem:det-search}]
        Correctness follows from \Cref{clm:level-contain} as we return that $(i,j)$ is accurate only after finding an exact witness, and a search will not terminate until either an exact witness is found or all candidate blocks are explored.
    
        To bound the number of queries, fix $(i,j)$. At each level, the depth-first search visits at most one block witness. Indeed, once it enters a block containing an exact witness, it finds such a witness before returning from that block, and then stops. 
        Over all output pairs, at most $N^2$ visited blocks at any level are block witnesses. 
        The number of false candidates at level $\ell$ is at most $\Phi^{(\ell)}$ by \Cref{clm:potential-ub}.
        
        There are $O(N^3/B)$ tests at level $0$.
        At level $\ell\geq1$, every tested block is a child of a visited level-$(\ell-1)$ candidate. 
        Thus the number of queries at level $\ell$, over the constant number $t$ of primes in $\calP^{(\ell)}$, is
        \[
         O(N^2+\Phi^{(\ell-1)})
         =O\!\left(N^2+\frac{N^3}{B2^{\ell(t-1)}}\right).
        \]
        The same bound holds at level $0$. 
        Multiplying by the query time in \Cref{lem:level-data-structure} gives the query term in the statement.
        The final scans at level $h$ cost
        \[
         O\!\left(N^2B^{(h)}+\Phi^{(h)}B^{(h)}\right)
         =O\!\left(\frac{N^2B}{2^h}+\frac{N^3}{2^{ht}}\right).
        \]
    
        It remains to include preprocessing and \Cref{lem:finding-good-primes}.
        Summing \Cref{lem:level-data-structure} over the $O(\log N)$ levels gives the preprocessing terms. 
        Furthermore, we have
        \[
         \TMUL\left(N,N(B^{(\ell)})^{t-1},N\right)
         =\bigO{N^\omega(B^{(\ell)})^{t-1}} \text{.}
        \]
        \Cref{lem:finding-good-primes}, summed over all levels, therefore contributes
        $\tO{N^\omega B^{t-1} P^{t+1} \log^{t + 1} U}$.
        Combining the bounds proves the lemma.
    \end{proof}

    \paragraph{Putting it Together: Choosing the Parameters.}
    We instantiate our algorithm with parameters
    \[
     c=21,\qquad \varepsilon=0.056,\qquad
     \gamma=0.0653,\qquad t=8.
    \]
    where $c, \varepsilon, \gamma$ are chosen to match \Cref{thm:faster-min-plus} and $t$ is a constant whose choice we justify later.
    Choose
    \[
     B=\Theta\!\left(\frac{N^{\varepsilon/2}}{\sqrt{\log U}}\right),
     \qquad P=\Theta(B\log U),\qquad
     h=\left\lceil\frac{\varepsilon\gamma}{t}\log N\right\rceil.
    \]
    Round $B$ down to a power of two. 
    Choose the constant in $P$ sufficiently
    large, and then the constant in $B$ sufficiently small that $2BP\leq N^\varepsilon$. 
    Thus $BP=\Theta(N^\varepsilon)$. 

    Next, we aim to choose $g_{\ell}$ so that the parameters of \Cref{lem:level-data-structure} are satisfied and we may apply \Cref{lem:det-search}.
    Before choosing $g_{\ell}$ at level $\ell$, we give a quick discussion on how this choice might affect the performance of \Cref{lem:level-data-structure}.
    The constants in \Cref{thm:av-query} may depend on parameters $c, \varepsilon, \theta_{\ell}$ and therefore $g_{\ell}$.
    To make these constants uniform over all levels and input sizes, we round each chosen $g_\ell$ upward to the nearest multiple of $10^{-6}$ (or any arbitrarily small fixed constant).
    Since $g_{\ell}$ lies in a bounded range, there are $O(1)$ total choices of $g_{\ell}$ and the $O()$ term of \Cref{thm:av-query} can indeed be bounded by an absolute constant.
    
    Following the above discussion, at level $\ell$, choose $g_\ell$ by rounding
    \[
     \frac{\varepsilon\gamma\log N+\ell}
          {\varepsilon\log N-\ell}
    \]
    upward to the nearest multiple of $10^{-6}$, and set $q_\ell=q_{c}(g_\ell)$. 
    Note that our choice of $\gamma, t$ ensure $h < \varepsilon \log N$.
    Intuitively, since we make fewer queries at higher levels $\ell$, we are willing to trade off slower queries in exchange for faster preprocessing.
    
    We first check the hypotheses of \Cref{lem:level-data-structure}.
    Note $h \leq \log B$ by our choice of $t, \gamma$.
    For every $0\leq \ell \leq h$, we compute
    \begin{align*}
        \gamma &\leq g_{\ell} \leq g_{h} \leq \frac{\varepsilon \gamma + \frac{h}{\log N}}{\varepsilon - \frac{h}{\log N}} + 10^{-6} \leq \frac{\varepsilon \gamma (t + 1) + O(\frac{1}{\log N})}{\varepsilon(t - \gamma) + O(\frac{1}{\log N})} + 10^{-6} \leq \frac{\gamma(t + 1)}{t - \gamma} + 10^{-6} + \bigO{\frac{1}{\log N}} < 0.075
    \end{align*}
    for sufficiently large $N$.
    Therefore, since $c = 21$, we have
    \[
     0<\theta_\ell=\frac{g_\ell\ln4}{\ln(20/9)}<0.131<0.9 \text{.}
    \]
    Next, observe that $R_c$ is decreasing in $g$, so that
    \begin{equation*}
        \varepsilon=0.056 < R_{c}(0.075)\leq R_{c}(g_\ell),
    \end{equation*}
    where $R_{21}(0.075)=0.05663927 \dots$.
    We have already ensured $2B^{(\ell)}P\leq 2BP \leq N^\varepsilon$ so that \Cref{lem:level-data-structure} applies at every level. 

    By our choice of $B, P$ and $\varepsilon$, we can bound the time required for \Cref{lem:finding-good-primes} as $\bigtO{N^{\omega + t \varepsilon}} = \bigtO{N^{3 - \varepsilon \gamma}}$ as long as $\omega + t \varepsilon \leq 3 - \varepsilon \gamma$ or $t \leq \frac{3 - \omega - \varepsilon \gamma}{\varepsilon}$, which our choices satisfy.
    Note that our choice of $P$ ensures that every prime in $\calQ$ is smaller than $P$ for sufficiently large $N$.
    Note that this also bounds the time required by the $\tO{N^2 P^2}$ preprocessing term.
    
    We next bound the preprocessing time required by \Cref{lem:level-data-structure}. 
    Since $B^{(\ell)}P=\Theta(N^\varepsilon/2^\ell)$, we have
    \[
     (B^{(\ell)}P)^{g_\ell} = \bigTh{ \left( N^{\varepsilon}/2^{\ell} \right)^{g_{\ell}}}
     = \bigTh{2^{(\varepsilon \log N - \ell)g_{\ell}}} = \bigOm{N^{\varepsilon \gamma} 2^{\ell}} \text{.}
    \]
    Using $P/B^{(\ell)}=\Theta(2^\ell\log U)$, the preprocessing term at
    each level is consequently
    \[
     \bigtO{\frac{N^3P}{B^{(\ell)}(B^{(\ell)}P)^{g_\ell}}}
     = \bigtO{N^{3-\varepsilon\gamma}} \text{.}
    \]
    
    We now bound the query time, recall that $q_{21}$ is concave in $g_{\ell}$, since entropy is concave and $\theta=g\ln4/\ln(20/9)$ is linear in $g$. 
    Our choice of $\gamma$ satisfies
    \begin{align*}
        q_{21}(\gamma)+10^{-6}q_{21}'(\gamma)+\gamma &=0.49998460\ldots<\tfrac12,\\
        (1+\gamma)q_{21}'(\gamma)-q_{21}(\gamma)&=5.24066398\ldots<6.
    \end{align*}
    By concavity and our choice of $g_{\ell}$, we have
    \[
     q_{21}(g_{\ell}) \leq q_{21}(\gamma) + q_{21}'(\gamma)
     (g_{\ell} - \gamma) \leq q_{21}(\gamma) + q_{21}'(\gamma)
     \left(\frac{(1+\gamma)\ell}{\varepsilon\log N-\ell} + 10^{-6}\right) \text{.}
    \]
    Multiplying by $(\varepsilon\log N-\ell) > 0$ and using the two inequalities
    above gives
    \begin{align*}
        (\varepsilon\log N-\ell)q_{21}(g_{\ell}) &\leq (\varepsilon \log N - \ell) q_{21}(\gamma) + (1 + \gamma) \ell q_{21}'(\gamma) + 10^{-6} q'_{21}(\gamma) (\varepsilon \log N - \ell) \\
        &= ( q_{21}(\gamma) + 10^{-6} q_{21}'(\gamma)) \varepsilon \log N + ((1 + \gamma) q_{21}'(\gamma) - q_{21}(\gamma) - 10^{-6} q_{21}'(\gamma)) \ell
    \end{align*}
    Since $q_{21}$ is increasing, we have $q'_{21}(\gamma) > 0$ and we may drop the last term in the upper bound.
    Applying the two inequalities above, we have
    \begin{align*}
        (\varepsilon\log N-\ell)q_{21}(g_{\ell})  < \varepsilon \log N(0.5 - \gamma) + 6 \ell
    \end{align*}
    so that the time to compute one level $\ell$ query is $(B^{(\ell)} P)^{q_{21}(g_{\ell})} = \bigO{N^{\varepsilon(0.5 - \gamma)} 2^{6 \ell}}$.
    Note that we also have $q_{21}(g_{\ell}) \leq q_{21}(0.075) < 0.5$ since $q_{21}$ is increasing, so we may also upper bound $(B^{(\ell)} P)^{q_{21}(g_{\ell})} < (BP)^{1/2} = \tO{N^{\varepsilon/2}}$.
    From \Cref{lem:det-search}, the total time to compute all queries is
    \begin{align*}
        \bigtO{N^{2 + \varepsilon / 2} + \sum_{\ell} \frac{N^3}{B 2^{\ell(t - 1)}} N^{\varepsilon(0.5 - \gamma)} 2^{6 \ell}} &= \bigtO{N^{2 + \varepsilon/2} + \sum_{\ell} \frac{N^{3 - \varepsilon \gamma}}{2^{\ell(t - 7)}}} \\
        &= \bigtO{N^{2 + \varepsilon/2} + N^{3 - \varepsilon \gamma}}
    \end{align*}
    by our choice of $t > 7$.
    The final scan at level $h$ requires time $\bigO{\frac{N^2B}{2^h}+\frac{N^3}{2^{ht}}} = \bigtO{N^{2+\varepsilon/2}+N^{3-\varepsilon\gamma}} = \bigtO{N^{3 - \varepsilon \gamma}}$ as long as $2 + \varepsilon / 2 \leq 3 - \varepsilon\gamma$ or $\gamma \leq \frac{2 - \varepsilon}{2\varepsilon}$, which holds for our choices of $\varepsilon, \gamma$.
    We conclude the proof by observing that \Cref{lem:verification-reduction} costs a logarithmic factor.
\end{proof}

\bibliographystyle{alpha}
\bibliography{biblio}

\end{document}